\documentclass[journal]{IEEEtran}

\usepackage{amsmath,amssymb,amsfonts,amsthm,mathtools}
\usepackage{cite}
\usepackage{booktabs}
\usepackage{array}
\usepackage{graphicx}
\usepackage{microtype}
\usepackage{xcolor}
\usepackage[caption=false,font=footnotesize]{subfig}
\usepackage[colorlinks=true,linkcolor=blue,citecolor=blue,urlcolor=blue]{hyperref}

\newtheorem{theorem}{Theorem}

\newtheorem{proposition}[theorem]{Proposition}
\newtheorem{corollary}[theorem]{Corollary}
\newtheorem{definition}[theorem]{Definition}
\newtheorem{assumption}[theorem]{Assumption}
\newtheorem{remark}[theorem]{Remark}

\newif\ifredline
\redlinefalse

\newcommand{\R}{\mathbb R}

\newcommand{\Pp}{\mathbb P}
\newcommand{\Ee}{\mathbb E}
\newcommand{\one}{\mathbf 1}

\newcommand{\cB}{\mathcal B}

\newcommand{\cE}{\mathcal E}
\newcommand{\cF}{\mathcal F}
\newcommand{\cG}{\mathcal G}
\newcommand{\cI}{\mathcal I}

\newcommand{\cS}{\mathcal S}

\newcommand{\cU}{\mathcal U}

\newcommand{\cW}{\mathcal W}
\newcommand{\cX}{\mathcal X}

\newcommand{\eps}{\varepsilon}
\newcommand{\bbeta}{\beta}

\DeclareMathOperator{\rad}{rad}

\title{Collective Tube Model Predictive Control With Distribution-Free Joint Safety Certificates}

\author{Giuseppe C. Calafiore, {\em Fellow,} IEEE
\thanks{Department of Electronics and Telecommunications, Politecnico di Torino, Italy. \tt{giuseppe.calafiore@polito.it}}}

\begin{document}
\maketitle

\begin{abstract}
Data-calibrated stochastic MPC typically builds separate risk margins for many events along the horizon, such as times, facets, state/input components,  or obstacles, and then combines them with a union bound.  
This approach is valid, but it does not match the key object used in the tube-MPC recursive-feasibility proof, which shifts a complete error tube.  This paper develops collective tube MPC (CT-MPC), where the calibrated uncertainty object is the finite-horizon prediction-error trajectory.  
A reusable trajectory tube is calibrated offline, its cross-sections define deterministic Pontryagin tightenings online, and the certified violation event is that a fresh prediction-error trajectory leaves the tube.
For linear systems with additive uncertainty and fixed ancillary feedback, we prove joint state-input safety over the prediction horizon, one-step recursive feasibility from an explicit shifted candidate, a finite-deployment risk bound, and a practical value-decrease inequality.  
The finite-sample certificate is distribution-free under split calibration and has beta-binomial form; its complexity is the certified number of residual trajectories that can define the tube, rather than the number of horizon-constraint blocks.  
We also give implementable shift-compatibility tests for polytopic tubes and a stable-compression fallback for irregular tube designers.  Numerical experiments compare CT-MPC with Bonferroni tightening, a sample-envelope tube, and a joint-in-time conformal MPC baseline.  The collective tube reduces deterministic tightening while preserving empirical safety and recursive feasibility. 

\end{abstract}

\begin{IEEEkeywords}
Model predictive control, stochastic MPC, tube MPC, recursive feasibility, chance constraints, scenario optimization, conformal prediction, distribution-free control.
\end{IEEEkeywords}

\section{Introduction}

Model predictive control (MPC) handles constraints by repeatedly solving a finite-horizon optimal-control problem, \cite{Mayne2000,Rawlings2020}.  In robust and stochastic MPC, tube constructions separate nominal motion from error motion: the nominal trajectory is optimized online, the error trajectory is confined to a tube, and recursive feasibility is proved by shifting the previous nominal solution and appending a terminal controller, see, e.g., \cite{Langson2004,Mayne2005,Cannon2011,Farina2016,Hewing2020Automatica}.  
Learning-based control adds a statistical design layer to this construction.  Scenario MPC, randomized chance-constrained MPC, probabilistic reachable sets, and conformal prediction provide finite-sample ways of replacing an unknown disturbance law by certified uncertainty sets, \cite{CalafioreFagiano2013,Schildbach2014,CampiPrandini2019,Hewing2020Lcss,Vovk2005,ShaferVovk2008,AngelopoulosBates2023}.

Recent conformal-control and conformal-MPC results show that distribution-free calibration can be part of a control proof, including trajectory-level confidence regions, dynamic-obstacle problems, and joint-in-time chance constraints for stochastic linear systems, see, e.g., \cite{Lindemann2025,Stamouli2024,Vlahakis2024,Vogel2026,Wang2025}.  The issue addressed here is the match between the statistical object and the control object.  A common implementation learns margins for a list of events, such as  times, facets, components, obstacles, or agents, and combines their risks by Boole's inequality.  This list-of-events view is natural statistically, but tube MPC shifts a complete error trajectory and its cross-sections.  When the residual trajectory is strongly dependent across time and constraints, marginal allocation can discard useful dependence information and can make the deterministic MPC problem tighter than necessary.

In CT-MPC we focus on the tube itself as the primitive statistical object.  The calibration data are complete residual trajectories, and  an offline design selects one trajectory tube, possibly after discarding complete residual trajectories, subject to deterministic shift-compatibility and terminal-set conditions.  The online MPC problem is then the standard tightened tube-MPC problem induced by the tube projections.  The certificate is attached to the single event that a fresh residual trajectory leaves the calibrated tube.  Thus, the probability statement, the tightening geometry, and the recursive-feasibility proof refer to the same object.

The viewpoint is complementary to randomized stochastic MPC.  Scenario MPC and randomized chance-constrained MPC can also certify joint events without allocating risk to every constraint block, and their sample complexity is governed by support rank, compression, or decision complexity rather than by the raw number of constraints, see \cite{CalafioreFagiano2013,Schildbach2014,CampiPrandini2019,Margellos2014,CampiGaratti2008}.  
The distinction, however, is architectural: rather than reinventing the recursive feasibility machinery or requiring complex online randomized solvers, our proposed approach preserves the classical deterministic tube-MPC law and its standard shifted-candidate proofs. Randomization is strictly confined to the offline certification of a reusable trajectory tube. This is a conscious design choice: by mapping the joint trajectory certificate directly into standard Pontryagin tightenings, the dependence-aware statistical design is paid for entirely offline, leaving a computationally lightweight, standard convex MPC problem for real-time execution.

Several recent works are especially close to the present approach.  Conformal SMPC methods, including \cite{Vogel2026}, construct finite-sample confidence regions for linear error trajectories and use them for joint-in-time chance-constrained MPC with recursive feasibility.  Conformal optimal-control work such as \cite{Vlahakis2024} also calibrates trajectory-level regions, and \cite{Stamouli2024} studies recursive feasibility for conformal MPC in dynamic environments.  The contribution of the present paper is to treat the calibrated trajectory set itself as a tube-design object: its projections determine deterministic tightenings, its support/compression complexity determines the finite-sample certificate, and its shift compatibility determines whether the usual tube-MPC recursive-feasibility argument survives calibration.

To clarify the scope,  we underline that the term \emph{distribution-free} means that no parametric distribution, Gaussian approximation, or independence assumption within a residual trajectory is imposed, but this does not remove the need for a representative calibration protocol.  The finite-sample certificate is conditional on the model class, ancillary gain, tube family, score geometry, and all tuning choices having been fixed before the calibration split is used. 
 If residual laws vary with unmodeled covariates or closed-loop operating regimes, then the calibration split must be updated online, or replaced by an explicitly covariate-conditional certificate.  Our key result certifies one-horizon tube containment and, by a union bound across MPC updates, a finite-deployment guarantee; however, it does not provide an infinite-horizon hard-safety certificate, unless a deterministic robust backup is added.

\vspace{.1cm}
\noindent
The main contributions are as follows:
\begin{enumerate}
\item We introduce collective tube MPC, a receding-horizon architecture in which the calibrated uncertainty object is the complete finite-horizon prediction-error trajectory.  The tube projections are used for Pontryagin tightening, while the probability certificate is attached to one joint tube-containment event.
\item We prove that CT-MPC preserves the standard control-theoretic structure of tube MPC for linear additive systems with fixed ancillary feedback.  Under explicit shift-compatibility and terminal inclusions, tube containment implies joint state-input safety over the prediction horizon, one-step recursive feasibility, a finite-deployment failure bound, and a practical value-function decrease.
\item We develop a finite-sample certification theory for collective tubes. Proper collective tube calibrators satisfy a beta-binomial risk certificate; for more general tube designers, the relevant quantity is a certified support/compression complexity rather than the number of time-constraint blocks or the observed active set in one data realization. Crucially, our formulation relies purely on set-theoretic boundary conditions rather than convex support constraints, completely decoupling the statistical certificate from the geometry of the deterministic MPC problem.

\item We give implementable synthesis conditions for shift-compatible polytopic tubes, prove an optimization dominance statement over modular discard allocation for shared monotone objectives, and compare CT-MPC with Bonferroni, sample-envelope, and joint conformal MPC baselines in a fully reproducible numerical study.
\end{enumerate}

\vspace{.1cm}
We adopt  linear dynamics in the main development because, in such case, under additive uncertainty and fixed ancillary feedback, the nominal and error dynamics separate exactly. This separation 
yields trajectory-independent calibrated tubes, explicit Pontryagin tightenings, and a clean recursive-feasibility proof. 
This choice thus isolates  the statistical contribution of the paper from additional nonlinear tube-design issues, and it leads to convex MPC problems. 
We remark, however, that  several ingredients of the proposed approach do not rely on linearity. In particular, the offline calibration of closed-loop residual trajectories, the beta-binomial certificate, the use of a reusable collective tube, and the resulting sample-based safety interpretation extend directly to any setting in which trajectories can be simulated and a nominal feedback policy is fixed. 
What becomes substantially harder in the nonlinear case is the deterministic MPC layer built around the calibrated tube, \cite{Mayne2011Nonlinear,Kohler2021Nonlinear}. The prediction error may in fact depend strongly on the nominal state and input, the propagated tube can be nonconvex and trajectory-dependent, and recursive feasibility or stability requires additional assumptions such as incremental stability, contraction, or computable nonlinear robust invariant sets. A full nonlinear theory combining the present collective calibration idea with such nonlinear tube constructions is therefore left as an open problem.

The paper is organized as follows.  Section~\ref{sec:problem} formulates the stochastic tube-MPC problem and the sampling model.  Section~\ref{sec:collective_tubes} defines collective tubes and proves the finite-sample tube certificate.  Section~\ref{sec:mpc} presents the CT-MPC problem.  Section~\ref{sec:rf} proves joint safety, recursive feasibility, finite-deployment certification, and practical stability.  Section~\ref{sec:allocation} compares collective and modular allocation.  Section~\ref{sec:synthesis} gives implementable synthesis conditions.  Section~\ref{sec:numerics} reports the numerical study, and Section~\ref{sec:conclusion} draws conclusions.

\section{Problem Setup}\label{sec:problem}
\subsection{System, constraints, and feedback decomposition}
We consider the uncertain linear system
\begin{equation}
    x_{k+1}=Ax_k+Bu_k+w_k,
    \label{eq:system}
\end{equation}
where $x_k\in\R^n$, $u_k\in\R^m$, and $w_k$ is an unknown disturbance. The state and input constraints are compact convex sets
\begin{equation}
    x_k\in\cX,
    \qquad
    u_k\in\cU,
    \label{eq:constraints}
\end{equation}
with the origin in their interiors. The distribution of the disturbance sequence is not assumed known and may be non-Gaussian, unbounded, and correlated in time. The only statistical information used for calibration is a collection of residual or disturbance trajectories generated under the same prediction architecture used online.

Let $K$ be a prestabilizing ancillary gain and set
\begin{equation}
    A_K:=A+BK.
\end{equation}
For a nominal state-input sequence $(z_t,v_t)$ satisfying
\begin{equation}
    z_{t+1}=Az_t+Bv_t,
    \label{eq:nominal_dyn}
\end{equation}
we write the actual prediction as $x_t=z_t+e_t$ and use the affine feedback
\begin{equation}
    u_t=v_t+Ke_t.
    \label{eq:ancillary_feedback}
\end{equation}
The prediction error evolves as
\begin{equation}
    e_{t+1}=A_Ke_t+w_t.
    \label{eq:error_dyn}
\end{equation}
At the beginning of each MPC optimization epoch we use measured-state initialization, so $z_0=x_k$ and $e_0=0$. The finite-horizon error trajectory is
\begin{equation}
    \mathbf e=(e_1,\ldots,e_H)\in\R^{nH}.
\end{equation}
The control objective is to solve a receding-horizon problem that satisfies a joint-in-time chance requirement of the form
\begin{equation}
\Pp\left\{
\begin{array}{l}
 x_{t\mid k}\in\cX,\ t=1,\ldots,H,\\
 u_{t\mid k}\in\cU,\ t=0,\ldots,H-1
\end{array}
\middle|\cI_k\right\}
\ge 1-\eps,
\label{eq:joint_chance}
\end{equation}
where $\cI_k$ is the information available at the planning time. Requirement \eqref{eq:joint_chance} is a single trajectory event, which differs from  imposing a separate chance constraint at each time and summing the risks.

\subsection{Trajectory samples and conditional sampling model}
The calibration data are residual trajectories
\begin{equation}
    \mathbf E^i=(E_1^i,\ldots,E_H^i),\qquad i=1,\ldots,N,
    \label{eq:calib_traj}
\end{equation}
obtained by propagating disturbance realizations through \eqref{eq:error_dyn}, by using held-out rollout residuals of a learned predictor, or by replaying stored episodes under the same information pattern. The fresh online residual trajectory at a planning time is denoted by $\mathbf E$.

\begin{assumption}[Calibration split and conditional trajectory sampling]\label{ass:exchangeability}
{The model class, the ancillary gain $K$, the tube family, and every tuning choice that affects the score, tube geometry, or support-complexity calculation are fixed before the calibration residuals \eqref{eq:calib_traj} are used. Equivalently, such choices are made on a separate training/tuning split and then frozen, or their own data dependence is included explicitly in the certified compression/support complexity. Conditional on this training/tuning information and on the online information pattern under which residuals are generated, the calibration residual trajectories $\mathbf E^1,\ldots,\mathbf E^N$ and any finite number of fresh residual trajectories are independent and identically distributed. In particular, every finite augmented collection of calibration and fresh trajectories is exchangeable.}
\end{assumption}

Assumption~\ref{ass:exchangeability} is the split-calibration sampling condition needed for the conditional violation-risk certificate $V_N$ in Theorem~\ref{thm:beta}. It is stronger than the minimal exchangeability condition used for one-shot marginal conformal coverage, but it is the standard condition behind conditional scenario and split-conformal risk bounds, 
\cite{BatesEtAl2021RiskControl,GarattiCampi2022}.
The assumption allows arbitrary temporal and spatial dependence \emph{within} each residual trajectory; the sampling requirement is across residual trajectories. In particular, errors at different prediction steps may be strongly correlated and heavy tailed. If the residual law depends on measured covariates, operating modes, or time-varying environments, then Assumption~\ref{ass:exchangeability} must be interpreted after conditioning on the chosen covariate stratum, or replaced by an appropriate weighted/Mondrian/online calibration theorem. Ties in boundary-defining scores are assumed to have probability zero or to be broken by an independent continuous tie-breaker; if deterministic ties can occur, the conservative certificate with a structural upper bound on the boundary size should be used.

\begin{remark}[Practicality of Exchangeability in Linear Error Dynamics]
While assuming identically distributed trajectories in general closed-loop operation is notoriously difficult due to state-dependent distribution shifts, the linear additive setting of \eqref{eq:system}--\eqref{eq:error_dyn} mitigates this. Because the error dynamics $e_{t+1} = A_K e_t + w_t$ separate completely from the nominal state $z_t$, the sequence $w_t$ drives the error independently of the current operating point. Consequently, a practitioner can construct valid, exchangeable calibration trajectories offline simply by replaying historical, unconditioned disturbance sequences through the fixed filter \eqref{eq:error_dyn}, perfectly simulating the error tubes the online controller will experience.
\end{remark}

\subsection{\texorpdfstring{{Limitations of modular allocation for MPC}}{Limitations of modular allocation for MPC}}
Suppose the joint event \eqref{eq:joint_chance} is decomposed into $M$ blocks, where a block may correspond to one time, one constraint facet, one output component, or one obstacle. If block $j$ is calibrated with violation risk $\eps_j$, then
\begin{equation}
    \Pp\{\hbox{some block violates}\}\le \sum_{j=1}^M\eps_j.
    \label{eq:bonferroni}
\end{equation}
This is the usual Bonferroni route. Its conservatism changes the control problem because the tightening is designed for marginal blocks rather than for the joint tube that is shifted in the MPC recursive-feasibility proof. In the following, 
we replace \eqref{eq:bonferroni} with a single calibrated tube event.

\section{Collective Trajectory Tubes}\label{sec:collective_tubes}
\subsection{Tube families and projections}
A collective trajectory tube is a measurable set
\begin{equation}
    \cE_\theta\subseteq\R^{nH},\qquad \theta\in\Theta,
    \label{eq:tube_family}
\end{equation}
where $\theta$ is a finite-dimensional design parameter. Its $t$-th cross-section is the projection
\begin{equation}
    \cS_t(\theta):=\{e_t:\exists(e_1,\ldots,e_H)\in\cE_\theta\},
    \qquad t=1,\ldots,H,
    \label{eq:projection}
\end{equation}
and $\cS_0(\theta):=\{0\}$. The online MPC problem uses the cross-sections for deterministic constraint tightening, while the probability certificate is assigned to the full event $\mathbf E\in\cE_\theta$.

The trajectory set itself need only be measurable for calibration.  For the deterministic MPC problem, however, we use compact convex cross-sections.  If the exact projection in \eqref{eq:projection} is nonconvex or difficult to represent, \(\cS_t(\theta)\) denotes any certified compact convex outer approximation of that projection.  This replacement can only tighten the nominal MPC constraints; it does not change the calibrated event \(\{\mathbf E\in\cE_\theta\}\), and it preserves the implication ``tube containment implies state-input safety.''
Examples include the following.
\begin{itemize}
\item \emph{Scalar trajectory-score tubes:} given a score $S:\R^{nH}\to\R$, set $\cE_q=\{\mathbf e:S(\mathbf e)\le q\}$. A typical MPC score is
\begin{equation}
    S(\mathbf e)=\max_{t,j}\frac{g_j^\top e_t}{\rho_{jt}},
    \label{eq:mpc_score}
\end{equation}
where $g_j$ are constraint normals and $\rho_{jt}>0$ are shape weights.
\item \emph{Rectangular trajectory tubes:} for facet functions $s_{jt}(e_t)$,
\begin{equation}
    \cE_q=\{\mathbf e:s_{jt}(e_t)\le q_{jt},\; j=1,\ldots,p,\; t=1,\ldots,H\}.
    \label{eq:rect_tube}
\end{equation}
Although \eqref{eq:rect_tube} is rectangular in score space, it is calibrated as one trajectory event.
\item \emph{Pareto and support-function tubes:} the tube may be the lower envelope of non-dominated residual trajectories, or a convex set described by finitely many support functions. These choices are useful when different constraints trade off along the horizon.
\end{itemize}

\subsection{Collective tube calibration}
Given calibration trajectories $S_N=(\mathbf E^1,\ldots,\mathbf E^N)$, a tube calibrator returns $\hat\theta_N=A_N(S_N)$. Its conditional trajectory violation risk is
\begin{equation}
    V_N:=\Pp\{\mathbf E\notin\cE_{\hat\theta_N}\mid S_N\}.
    \label{eq:VN}
\end{equation}
{The conditioning on the fixed training/tuning information and on the online information pattern in Assumption~\ref{ass:exchangeability} is suppressed in \eqref{eq:VN}.}
The tube can be selected by solving the offline design problem
\begin{equation}
\begin{aligned}
    \min_{\theta\in\Theta}\quad &\Phi_{\rm MPC}(\theta)\\
    \text{s.t.}\quad &\sum_{i=1}^N \one\{\mathbf E^i\notin\cE_\theta\}\le r,\\
    &\theta\in\Theta_{\rm sh},
\end{aligned}
\label{eq:collective_tube_design}
\end{equation}
where $\Phi_{\rm MPC}$ measures online conservatism, $r$ is a trajectory-discard budget, and $\Theta_{\rm sh}$ encodes the shift-compatibility conditions introduced in Section~\ref{sec:rf}. Examples of $\Phi_{\rm MPC}$ include the sum of support functions of the tightened sets, the loss of terminal-set volume, or the optimal value of a representative nominal MPC problem.

The important point is that the objects discarded in \eqref{eq:collective_tube_design} are complete trajectories. A residual trajectory that is extreme at many times consumes one discard, rather than many independent risk budgets.

The discard budget \(r\) is not, by itself, the statistical complexity of an arbitrary optimized tube.  For a scalar order-statistic score, the complexity is exactly \(r+1\).  For a multi-parameter shape optimization, the certificate must also account for the residual trajectories needed to reconstruct the active tube shape, and for any data-dependent score or template choice.  Thus the number inserted into Theorem~\ref{thm:beta} is a certified support/compression bound for the whole tube-design rule, not merely the number of discarded calibration trajectories.  This distinction prevents the most common misuse of scenario and conformal certificates in data-designed MPC.

\subsection{Finite-sample certificate for a calibrated tube}
For $s\in\{1,\ldots,N\}$ define
\begin{equation}
    B_{N,s}(\eps):=
    \sum_{j=0}^{s-1}\binom{N}{j}\eps^j(1-\eps)^{N-j},
    \label{eq:BNs}
\end{equation}
and
\begin{equation}
    \eps_{N,s}(\bbeta):=\inf\{\eps\in[0,1]:B_{N,s}(\eps)\le\bbeta\}.
    \label{eq:epsNs}
\end{equation}
It is convenient to name the calibration-success event
\begin{equation}
    \cG_N(\bbeta):=\{V_N\le \eps_{N,s}(\bbeta)\}.
    \label{eq:calibration_event}
\end{equation}
Theorem~\ref{thm:beta} states that \(\Pp^N\{\cG_N(\bbeta)\}\ge 1-\bbeta\).  After calibration, \(\cG_N(\bbeta)\) is not directly observable; it is a PAC statement about the random calibration sample.  All closed-loop probability bounds below are to be read conditionally on this event, and hence hold with the same calibration confidence.

The next result is stated for tube calibrators that have a proper boundary. It is a compact version of the scenario/conformal argument, included here because it is the statistical certificate used by CT-MPC.

For a deterministic collection $T_n=(\mathbf e^1,\ldots,\mathbf e^n)$ and an index set $I\subset[n]$, let $A(T_I)$ be the tube parameter returned by the calibrator trained on $T_I$, and write $\cE(T_I):=\cE_{A(T_I)}$. A boundary map assigns a set $\cB_n(T_n)\subset[n]$.

\begin{definition}[Proper collective tube boundary]\label{def:proper}
A calibrator is a {\em proper collective tube boundary scheme} with size $s$ if, outside null tie events, for every $n\ge s$, every deterministic $T_n$, and every split {$I\subset[n]$ with $|I|\ge s$}, $J=[n]\setminus I$, the following hold:
\begin{enumerate}
\item Boundary equivalence:
\begin{equation}
    \mathbf e^j\in\cE(T_I)\;\forall j\in J
    \quad\Longleftrightarrow\quad
    \cB_n(T_n)\subseteq I.
    \label{eq:boundary_equiv}
\end{equation}
\item Projectivity: if $\cB_n(T_n)\subseteq I$, then $\cB_{|I|}(T_I)=\cB_n(T_n)$ after relabeling.
\item 
Fixed boundary size: \({|\cB_n(T_n)|=s}\).
\end{enumerate}
If items 1)--2) hold and item 3) is replaced by
\[
{|\cB_n(T_n)|\le \bar s}
\]
for all \(n\) and all deterministic samples \(T_n\), then we say that the
calibrator has certified boundary complexity at most \(\bar s\).
The number \(\bar s\) is a structural upper bound of the calibration rule.
\end{definition}

\begin{remark}[Departure from convex scenario optimization]\label{rem:departure_scenario}
While the resulting beta-binomial certificate shares the form familiar from the classical scenario approach \cite{CampiGaratti2008}, Definition~\ref{def:proper} abstracts the complexity mechanism entirely from convexity. In classical scenario optimization, the support size is strictly tied to the number of support constraints (related to Helly's dimension) of a convex program. By requiring only the combinatorial properties of boundary equivalence and projectivity, our framework allows for non-convex, discrete, or nonlinear tube calibration rules (such as scalar trajectory-score thresholds or Pareto frontiers) while preserving the exact finite-sample certificate.
\end{remark}

\begin{remark}[A concrete  fixed-\(s\) example.] Let \(S(\mathbf e)\) be any scalar score of the complete prediction-error trajectory, for instance the maximum normalized facet violation in \eqref{eq:mpc_score}. Fix an integer discard budget \(r\) before calibration and set \(q_N\) equal to the \((r+1)\)-st largest value among \(S(\mathbf E^1),\ldots,S(\mathbf E^N)\). The calibrated tube is \(\cE_{q_N}=\{\mathbf e:S(\mathbf e)\le q_N\}\), so exactly the \(r\) largest-score calibration trajectories may be excluded. Outside ties, the boundary of any deterministic sample is the set of its \(r+1\) largest-score trajectories; retaining this set preserves the same threshold and accepts every omitted trajectory, while omitting one of these trajectories lowers the threshold or leaves an omitted trajectory outside. Hence the proper-boundary conditions hold with the fixed collective support size \(s=r+1\). The case \(r=0\) is the sample-envelope tube with \(s=1\). Importantly, a boundary trajectory may be extreme at many times or facets, but it still contributes one unit to \(s\), because the certified event is the whole-trajectory event.
\end{remark}

\begin{theorem}[Collective tube risk certificate]\label{thm:beta}
{Let $\mathbf E^1,\ldots,\mathbf E^N$ be calibration residual trajectories satisfying Assumption~\ref{ass:exchangeability}.}

{\emph{(i)}} If the tube calibrator is a proper collective tube boundary scheme with boundary size $s$, then
\begin{equation}
    V_N\sim {\rm Beta}(s,N-s+1).
    \label{eq:beta_law}
\end{equation}
Equivalently, for every $\eps\in[0,1]$,
\begin{equation}
    \Pp^N\{V_N>\eps\}=B_{N,s}(\eps).
    \label{eq:pac_tail}
\end{equation}
Consequently,
\begin{equation}
    \Pp^N\{V_N\le \eps_{N,s}(\bbeta)\}\ge 1-\bbeta.
    \label{eq:pac_cert}
\end{equation}

{\emph{(ii)}} If the calibrator has certified
boundary complexity at most \(\bar s\), then the conservative certificate
\[
    \Pp^N\{V_N>\eps\}\le B_{N,\bar s}(\eps)
\]
holds. Consequently,
\[
    \Pp^N\{V_N\le \eps_{N,\bar s}(\bbeta)\}\ge 1-\bbeta .
\]
\end{theorem}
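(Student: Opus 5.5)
The plan is the standard moment-matching argument from scenario theory. We compute $\Ee[V_N^k]$ for every integer $k\ge 0$, identify it with the moments of ${\rm Beta}(s,N-s+1)$, and conclude by the moment determinacy of distributions on $[0,1]$. Part (ii) then follows from a modified version of the same counting.

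Fix $k\ge 0$ and draw $k$ fresh trajectories $\mathbf E^{N+1},\ldots,\mathbf E^{N+k}$. By Assumption~\ref{ass:exchangeability}, conditional on $S_N$ they are i.i.d. and independent of $S_N$. Hence
\[
\Ee[V_N^k]=\Pp\{\mathbf E^{N+j}\notin\cE(T_{[N]})\ \forall j=1,\ldots,k\},
\]
where $T_{N+k}$ is the augmented collection. This is not yet the form we need, since Definition~\ref{def:proper} characterizes \emph{containment} of omitted points rather than exclusion. I will therefore work with $1-V_N$:
\[
\Ee[(1-V_N)^k]=\Pp\{\mathbf E^{N+j}\in\cE(T_{[N]})\ \forall j\},
\]
and match these moments to those of $1-V_N\sim{\rm Beta}(N-s+1,s)$, which is equivalent. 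Take $n=N+k$ and $I=[N]$, so $|I|=N\ge s$. Boundary equivalence \eqref{eq:boundary_equiv} says the event above equals $\{\cB_{N+k}(T_{N+k})\subseteq [N]\}$. By exchangeability of the augmented collection, and since the boundary is almost surely a set of exactly $s$ indices (item 3, ties null), the boundary is equally likely to be any fixed $s$-subset. I will justify this by summing over subsets and using the permutation invariance of the joint law, together with the fact that $\cB_n$ depends on the labeled sample in a relabeling-equivariant way; this equivariance is implicit in the definition and I would state it explicitly. Then
\[
\Pp\{\cB\subseteq[N]\}=\binom{N}{s}\Big/\binom{N+k}{s},
\]
and this equals the $k$-th moment of ${\rm Beta}(N-s+1,s)$, namely $\prod_{i=0}^{k-1}\frac{N-s+1+i}{N+1+i}$. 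The identity is a short factorial check, and projectivity (item 2) is not even needed here beyond consistency. Hausdorff moment determinacy gives \eqref{eq:beta_law}. The tail \eqref{eq:pac_tail} is the classical identity relating the Beta$(s,N-s+1)$ tail to the binomial CDF $B_{N,s}$. Finally, \eqref{eq:pac_cert} follows because $B_{N,s}$ is continuous and decreasing in $\eps$, so $B_{N,s}(\eps_{N,s}(\bbeta))\le\bbeta$.

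For (ii), the sizes of boundaries now vary, and moment matching no longer gives an inequality on the tails directly. I would instead use the projectivity route. Condition on the unordered augmented sample and on the event that the boundary of $T_{[N]}$ has size $\sigma\le\bar s$. Projectivity and boundary equivalence then show that $V_N$ conditional on $|\cB_N|=\sigma$ is stochastically dominated by ${\rm Beta}(\sigma,N-\sigma+1)$; to obtain this I would redo the moment computation on the event $\{|\cB|=\sigma\}$. The bound then follows because $B_{N,\sigma}(\eps)\le B_{N,\bar s}(\eps)$ for $\sigma\le\bar s$. A cleaner alternative, which I would try first, is to cite the compression-scheme theorem of Campi--Garatti (stable compression of size at most $\bar s$, with boundary equivalence supplying preference/consistency) that yields exactly $\Pp\{V_N>\eps\}\le B_{N,\bar s}(\eps)$.

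The main obstacle is (ii). Mixing over random boundary sizes does not obviously preserve domination, because conditioning on $|\cB|=\sigma$ interacts with the distribution of $V_N$. The fallback is the Garatti--Campi compression argument.
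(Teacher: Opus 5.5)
Your part (i) is the paper's argument. You fix $I=[N]$ and use relabeling-equivariance of $\cB_n$ to make the boundary uniform over $s$-subsets; the paper instead conditions on $T_{N+M}$ and draws $I$ uniformly. The two are equivalent. Equivariance need not be assumed: for a symmetric calibrator it follows from boundary equivalence, since $\cB_n(T_n)$ is the smallest $I$ with $|I|\ge s$ and $T_J\subset\cE(T_I)$.

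The genuine gap is in (ii), in the route you spell out. The claim that $V_N$ given $|\cB_N|=\sigma$ is stochastically dominated by $\mathrm{Beta}(\sigma,N-\sigma+1)$ is false in general. Write $\cB_N=\cB_N(T_{[N]})$ and $\cB_{N+k}=\cB_{N+k}(T_{N+k})$. Carrying out your moment computation on the event, boundary equivalence and projectivity give
\[
\Ee\bigl[(1-V_N)^k\,\one\{|\cB_N|=\sigma\}\bigr]
=\Pp\{|\cB_{N+k}|=\sigma\}\binom{N}{\sigma}\Big/\binom{N+k}{\sigma}.
\]
The conditional moments therefore carry the factor $\Pp\{|\cB_{N+k}|=\sigma\}/\Pp\{|\cB_N|=\sigma\}$. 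This factor depends on the law of the boundary size at every augmented sample size, so the conditional moments are not Beta moments; this is the obstruction the paper's footnote on random boundary sizes refers to.

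A concrete counterexample comes from the family of Appendix~\ref{app:boundary}: take two independent $\mathrm{Unif}(0,1)$ scores with envelope thresholds ($r_1=r_2=0$, so $\bar s=2$). On $\{|\cB_N|=1\}$ the two sample maxima $q_1,q_2$ are conditionally i.i.d.\ $\mathrm{Beta}(N,1)$, and $1-V_N=q_1q_2<q_1$. Hence
\[
\Pp\{V_N>\eps\mid |\cB_N|=1\}>(1-\eps)^N=B_{N,1}(\eps),\qquad \eps\in(0,1).
\]
Per-size bounds therefore cannot be averaged. The unconditional inequality does hold in this example, but it has to come from an argument that never conditions on the realized boundary size.

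The paper's sketch does exactly that by reducing to the fixed-size case. It pads each boundary with non-boundary indices up to size exactly $\bar s$ and asserts that the padded rule still satisfies boundary equivalence and projectivity. Part (i) then gives an exact $\mathrm{Beta}(\bar s,N-\bar s+1)$ law for the padded rule's violation. The bound follows because the padded rule is more conservative, so its violation stochastically dominates $V_N$.

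Your citation fallback is a legitimate alternative, but as written it is only a pointer. You would need to check the hypotheses of the theorem you invoke:
\begin{itemize}
\item Identifying $V_N$ with the probability that adding a fresh trajectory changes the boundary is immediate, from boundary equivalence with $|J|=1$ together with projectivity.
\item The preference/consistency property is the substantive check. It must be verified for boundary schemes whose tubes may exclude calibration trajectories (discards) before $B_{N,\bar s}$ can be imported.
\end{itemize}
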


\begin{proof}
{Draw $M$ fresh residual trajectories $\mathbf Y^1,\ldots,\mathbf Y^M$ from the same conditional law as the calibration trajectories, independently of $S_N$ given the training/tuning information and online information pattern. Conditionally on $S_N$,}
\begin{equation}
    \Pp\{\mathbf Y^1,\ldots,\mathbf Y^M\in\cE(S_N)\mid S_N\}=(1-V_N)^M.
    \label{eq:moment_start}
\end{equation}
Augment the sample to $T_{N+M}=(S_N,\mathbf Y^1,\ldots,\mathbf Y^M)$. {By the exchangeability of the augmented conditionally i.i.d. sample, training on the first $N$ samples is equivalent to first drawing $T_{N+M}$ and then selecting a uniformly random $N$-subset $I\subset[N+M]$ as the training indices.} Let $J=[N+M]\setminus I$. Boundary equivalence gives
\begin{equation}
    T_J\subset\cE(T_I)
    \quad\Longleftrightarrow\quad
    \cB_{N+M}(T_{N+M})\subseteq I.
\end{equation}
Conditional on $T_{N+M}$, the boundary is a fixed $s$-element set. The probability that a uniformly random $N$-subset contains it is
\begin{equation}
    \frac{\binom{N}{s}}{\binom{N+M}{s}}.
\end{equation}
Thus $\Ee[(1-V_N)^M]=\binom{N}{s}/\binom{N+M}{s}$ for all $M\ge0$. These are exactly the moments of $1-U$ when $U\sim {\rm Beta}(s,N-s+1)$. Moment determinacy on $[0,1]$ proves \eqref{eq:beta_law}; \eqref{eq:pac_tail} and \eqref{eq:pac_cert} follow from the beta-binomial identity.

For part (ii), the upper bound structurally mirrors the classical scenario approach \cite{CampiGaratti2008,CalafioreSIOPT2010}, where $\bar s$ traditionally represents a bound on the number of support constraints in a strictly convex program. However, because our abstract setting defined in Definition~\ref{def:proper} relies purely on the combinatorial boundary structure rather than convexity, the proof requires a different, more general lifting argument. We outline the proof idea here, but omit the exceedingly technical details: 
if the calibration rule has support complexity at most \(\bar s\), one may define a modified rule
in which  each boundary set of the original rule is padded with arbitrary non-boundary
indices until it has cardinality \(\bar s\).
Boundary equivalence and
projectivity can be proved to still hold for these padded boundary sets, hence 
the modified rule is a proper projective boundary scheme, now
with deterministic fixed boundary size \(\bar s\). The fixed boundary result on point (i), with exact beta law, then applies to the modified rule.
This,  associated with the fact that the padded rule is
more conservative than the original one (a stochastic dominance relation can be proved to hold), finally gives the stated beta-binomial upper bound on the probability of $V_N >\epsilon$.
\end{proof}

\begin{remark}[Role of the certificate]
Theorem~\ref{thm:beta} provides a statistical tube certificate. This
certificate is used in Theorem~\ref{thm:horizon_safety} for one-horizon
state/input safety and in Theorem~\ref{thm:rf} for recursive feasibility of the
resulting tube-MPC law. Corollary~\ref{cor:finite_deploy} combines these
one-step statements over a finite deployment. In the remainder of the paper,
\(s\) denotes either the exact fixed boundary size of a proper calibrator or a
conservative structural upper bound used in the beta-binomial certificate; it
is not the observed random support size of one calibration realization unless a
separate random-complexity theorem is invoked explicitly.
\end{remark}

\section{Collective Tube MPC}\label{sec:mpc}
\subsection{Online optimization problem}
Fix a calibrated tube parameter $\theta$ and let $\cS_t=\cS_t(\theta)$. For compact sets $A,B$, $A\ominus B:=\{a:a+B\subseteq A\}$ denotes the Pontryagin difference. The CT-MPC problem at state $x$ is
\begin{subequations}\label{eq:ctmpc}
\begin{align}
    J_H^\star(x;\theta):=
    \min_{\mathbf z,\mathbf v}\quad &
        \ell_f(z_H)+\sum_{t=0}^{H-1}\ell(z_t,v_t)\label{eq:ctmpc_cost}\\
    \text{s.t.}\quad
        &z_0=x,\label{eq:ctmpc_init}\\
        &z_{t+1}=Az_t+Bv_t,
        &&t=0,\ldots,H-1,\label{eq:ctmpc_dyn}\\
        &z_t\in\cX\ominus\cS_t,
        &&t=0,\ldots,H-1,\label{eq:ctmpc_state}\\
        &v_t\in\cU\ominus K\cS_t,
        &&t=0,\ldots,H-1,\label{eq:ctmpc_input}\\
        &z_H\in\cX_f(\theta).
        \label{eq:ctmpc_terminal}
\end{align}
\end{subequations}
Here $K\cS_t:=\{Ke:e\in\cS_t\}$. Since $\cS_0=\{0\}$, \eqref{eq:ctmpc_input} at $t=0$ is simply $v_0\in\cU$. The applied input is
\begin{equation}
    u_k=v_{0\mid k}^\star,
    \label{eq:apply_input}
\end{equation}
because the prediction error at the measured initial state is zero. The input tightening for $t\ge1$ is nevertheless essential and it is what makes the shifted candidate feasible after the first random error is realized.

\subsection{Deterministic admissibility of a tube}
The online problem \eqref{eq:ctmpc} is a deterministic optimization problem. The next definition specifies when its tightened constraints are compatible with the standard MPC shift.

\begin{definition}[Shift-compatible tube]\label{def:shift}
Let $A_K=A+BK$. A tube with cross-sections $\cS_0=\{0\},\cS_1,\ldots,\cS_H$ is shift-compatible if
\begin{equation}
    A_K^j\cS_1\oplus\cS_j\subseteq\cS_{j+1},
    \qquad j=0,\ldots,H-1.
    \label{eq:shift_compat}
\end{equation}
\end{definition}
Condition \eqref{eq:shift_compat} is the finite-horizon analogue of robust tube recursion. If $\cS_j=\bigoplus_{i=0}^{j-1}A_K^i\cW$ for a disturbance set $\cW$, then \eqref{eq:shift_compat} holds with equality. In the data-calibrated case, \eqref{eq:shift_compat} is imposed on the tube family $\Theta_{\rm sh}$ in \eqref{eq:collective_tube_design}.

Only the inclusions with \(j=0,\ldots,H-2\) are used to shift the stage tightenings in the proof of Theorem~\ref{thm:rf}; the inclusion with \(j=H-1\) is a convenient horizon-closure condition and is automatically satisfied by the standard reachable-set recursion.  If a less restrictive implementation omits it, the terminal inclusions in Assumption~\ref{ass:terminal} must be checked directly, as they are below.

\begin{assumption}[Terminal ingredients]\label{ass:terminal}
There exist a compact convex set $\cX_f(\theta)$ and a terminal gain $K_f$ such that, with $A_f:=A+BK_f$,
\begin{subequations}\label{eq:terminal_ass}
\begin{align}
    {\cX_f\oplus \cS_H} &{\subseteq \cX,}\label{eq:term_horizon_safe}\\
    \cX_f\oplus A_K^{H-1}\cS_1 &\subseteq \cX\ominus\cS_{H-1},\label{eq:term_state}\\
    K_f(\cX_f\oplus A_K^{H-1}\cS_1) &\subseteq \cU\ominus K\cS_{H-1},\label{eq:term_input}\\
    A_f(\cX_f\oplus A_K^{H-1}\cS_1) &\subseteq \cX_f.\label{eq:term_inv}
\end{align}
\end{subequations}
Moreover, the terminal cost satisfies the nominal Lyapunov inequality
\begin{equation}
    \ell_f(A_fz)-\ell_f(z)\le -\ell(z,K_fz),
    \qquad z\in\cX_f.
    \label{eq:terminal_cost}
\end{equation}
\end{assumption}
Assumption~\ref{ass:terminal} is a tube-MPC terminal condition. The additional set $A_K^{H-1}\cS_1$ appears because measured-state reinitialization shifts the previous nominal tail by the realized one-step prediction error.
The added inclusion \eqref{eq:term_horizon_safe} is the terminal safety tightening: it ensures that the terminal nominal state used in \eqref{eq:ctmpc_terminal} also satisfies the original state constraint after any terminal error in \(\cS_H\).

\section{Safety, Recursive Feasibility, and Stability}\label{sec:rf}
\subsection{Joint safety over one horizon}
The first theorem connects a certified trajectory tube to the joint chance constraint.

\begin{theorem}[Certified horizon safety]\label{thm:horizon_safety}
Suppose Assumption~\ref{ass:exchangeability} holds, $\theta$ is fixed after calibration, and \eqref{eq:ctmpc} is feasible at state $x_k$. If the feedback sequence $u_{t\mid k}=v_{t\mid k}^\star+Ke_{t\mid k}$ were applied over the prediction horizon, then
\begin{equation}
    \mathbf E_k\in\cE_\theta
    \quad\Longrightarrow\quad
    x_{t\mid k}\in\cX,\quad u_{t\mid k}\in\cU,
\end{equation}
for all $t=1,\ldots,H$ for the state constraints and all $t=0,\ldots,H-1$ for the input constraints. Consequently, if $\hat\theta_N$ satisfies \eqref{eq:pac_cert}, then with probability at least $1-\bbeta$ over the calibration data,
\begin{multline}
\Pp\{\text{some state or input constraint violates}\\
\text{over the horizon}\mid S_N\}
\le \eps_{N,s}(\bbeta).
\label{eq:horizon_certified}
\end{multline}
\end{theorem}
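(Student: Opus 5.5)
The proof has two parts: a deterministic implication, followed by the probabilistic transfer via Theorem~\ref{thm:beta}.

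\emph{Deterministic part.} Fix a feasible solution $(\mathbf z^\star,\mathbf v^\star)$ of \eqref{eq:ctmpc} at $x_k$. Apply $u_{t\mid k}=v^\star_{t\mid k}+Ke_{t\mid k}$ with $z_{0\mid k}=x_k$, $e_{0\mid k}=0$. By \eqref{eq:nominal_dyn}--\eqref{eq:error_dyn}, the realized predicted state is exactly $x_{t\mid k}=z^\star_{t\mid k}+e_{t\mid k}$. The error trajectory $(e_{1\mid k},\ldots,e_{H\mid k})$ is the fresh residual trajectory $\mathbf E_k$.

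Assume $\mathbf E_k\in\cE_\theta$. By the projection definition \eqref{eq:projection}, or by its certified convex outer approximation, $e_{t\mid k}\in\cS_t$ for $t=1,\ldots,H$. Also $e_{0\mid k}=0\in\cS_0$.

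For the states at $t=1,\ldots,H-1$, constraint \eqref{eq:ctmpc_state} gives $z^\star_t\in\cX\ominus\cS_t$. By the definition of the Pontryagin difference, $z^\star_t+\cS_t\subseteq\cX$, so $x_{t\mid k}\in\cX$. For $t=H$, constraint \eqref{eq:ctmpc_terminal} gives $z^\star_H\in\cX_f$, and \eqref{eq:term_horizon_safe} gives $\cX_f\oplus\cS_H\subseteq\cX$; hence $x_{H\mid k}\in\cX$.

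For the inputs at $t=0,\ldots,H-1$, constraint \eqref{eq:ctmpc_input} gives $v^\star_t+K\cS_t\subseteq\cU$, and $Ke_{t\mid k}\in K\cS_t$, so $u_{t\mid k}\in\cU$. At $t=0$ this reduces to $u_{0\mid k}=v^\star_0\in\cU$.

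Together these give the stated implication. The only point needing care is $t=H$, which is exactly why the terminal safety inclusion \eqref{eq:term_horizon_safe} was added.

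\emph{Probabilistic part.} The implication gives the event inclusion $\{\text{some violation}\}\subseteq\{\mathbf E_k\notin\cE_{\hat\theta_N}\}$. Conditional on $S_N$, the parameter $\hat\theta_N$ is fixed. The nominal solution depends only on $x_k$ and $\hat\theta_N$, and the deterministic implication holds for every feasible nominal solution. Therefore
\[
\Pp\{\text{violation}\mid S_N\}\le\Pp\{\mathbf E_k\notin\cE_{\hat\theta_N}\mid S_N\}=V_N .
\]
This last equality uses Assumption~\ref{ass:exchangeability}: the fresh trajectory $\mathbf E_k$ has the same conditional law as the calibration trajectories and is independent of $S_N$, given the information pattern. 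In the linear setting, $\mathbf E_k$ is driven only by the disturbances and is independent of $x_k$.

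Finally, Theorem~\ref{thm:beta}, via \eqref{eq:pac_cert} (or part (ii) with $\bar s$), gives $V_N\le\eps_{N,s}(\bbeta)$ on $\cG_N(\bbeta)$, an event of $\Pp^N$-probability at least $1-\bbeta$. This yields \eqref{eq:horizon_certified}.

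The main obstacle is justifying that the conditional law of $\mathbf E_k$ given $S_N$ and $\cI_k$ equals the law used in the definition of $V_N$. This requires that the online residual be independent of both the calibration data and the current state, and that is precisely the content of the assumption.
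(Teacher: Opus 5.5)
Your proposal is correct and follows the paper's proof essentially step for step: projections give $e_{t\mid k}\in\cS_t$, the Pontryagin tightenings cover the states for $t\le H-1$ and all inputs, the terminal inclusion \eqref{eq:term_horizon_safe} covers $t=H$, and the pointwise event inclusion combined with Theorem~\ref{thm:beta} gives \eqref{eq:horizon_certified}. One small remark: you do not need your aside that $\mathbf E_k$ is independent of $x_k$, and it need not hold when disturbances are correlated in time, because the inclusion holds for every feasible nominal solution and only the conditional law of $\mathbf E_k$ given $S_N$ under Assumption~\ref{ass:exchangeability} enters the bound.
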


\begin{proof}
If $\mathbf E_k\in\cE_\theta$, then $e_{t\mid k}\in\cS_t$ for every $t$ by definition of the projections. For $t=1,\ldots,H-1$, $z_{t\mid k}^\star\in\cX\ominus\cS_t$ gives $z_{t\mid k}^\star+e_{t\mid k}\in\cX$. {For the terminal state, $z_{H\mid k}^\star\in\cX_f$ and \eqref{eq:term_horizon_safe} give $z_{H\mid k}^\star+e_{H\mid k}\in\cX$.} Similarly, $v_{t\mid k}^\star\in\cU\ominus K\cS_t$ implies $v_{t\mid k}^\star+Ke_{t\mid k}\in\cU$. {Thus a state/input violation can occur only on the tube-escape event \(\{\mathbf E_k\notin\cE_\theta\}\); its probability is therefore bounded by the tube violation risk \(V_N\).} Equation~\eqref{eq:horizon_certified} follows from Theorem~\ref{thm:beta}.
\end{proof}

\subsection{Recursive feasibility}
The next theorem is the central control result. It shows that collective calibration does not break the standard tube-MPC shift argument.

\begin{theorem}[Recursive feasibility on the calibrated tube]\label{thm:rf}
Let $\theta$ be such that the cross-sections of $\cE_\theta$ are shift-compatible in the sense of Definition~\ref{def:shift}, and suppose Assumption~\ref{ass:terminal} holds. If \eqref{eq:ctmpc} is feasible at time $k$ and the realized one-step prediction error satisfies
\begin{equation}
    e_{1\mid k}:=x_{k+1}-z_{1\mid k}^\star\in\cS_1,
    \label{eq:e1_good}
\end{equation}
then \eqref{eq:ctmpc} is feasible at time $k+1$.

If, in addition, $\hat\theta_N$ satisfies \eqref{eq:pac_cert}, then with calibration confidence at least $1-\bbeta$,
\begin{multline}
\Pp\{\text{CT-MPC is infeasible at }k+1\mid\\
\text{CT-MPC is feasible at }k,S_N\}
\le \eps_{N,s}(\bbeta).
\label{eq:rf_cert}
\end{multline}
\end{theorem}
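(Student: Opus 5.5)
The plan is the standard shifted-candidate argument, adapted to measured-state reinitialization. Let $(\mathbf z^\star,\mathbf v^\star)$ be optimal (or merely feasible) at time $k$. Since $u_k=v^\star_{0\mid k}$, the realized error is $e:=e_{1\mid k}=x_{k+1}-z^\star_{1\mid k}$, and by \eqref{eq:e1_good} we have $e\in\cS_1$. At time $k+1$ I will propose the candidate
\[
\tilde z_t:=z^\star_{t+1\mid k}+A_K^{t}e,\qquad \tilde v_t:=v^\star_{t+1\mid k}+KA_K^{t}e,\qquad t=0,\ldots,H-2,
\]
followed by the terminal step $\tilde v_{H-1}:=K_f\tilde z_{H-1}$ and $\tilde z_H:=A_f\tilde z_{H-1}$.

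\textbf{Dynamics and initial condition.} Here $\tilde z_0=z^\star_{1\mid k}+e=x_{k+1}$, so \eqref{eq:ctmpc_init} holds. The dynamics \eqref{eq:ctmpc_dyn} hold because
\[
A\tilde z_t+B\tilde v_t=z^\star_{t+2\mid k}+(A+BK)A_K^te=\tilde z_{t+1}.
\]
This identity is the main computation. The terminal step is nominal by construction.

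\textbf{Stage constraints for $t=0,\ldots,H-2$.} The old solution gives $z^\star_{t+1\mid k}\in\cX\ominus\cS_{t+1}$. Shift-compatibility \eqref{eq:shift_compat} with $j=t$ gives $A_K^te+\cS_t\subseteq A_K^t\cS_1\oplus\cS_t\subseteq\cS_{t+1}$. Hence $\tilde z_t+\cS_t\subseteq z^\star_{t+1\mid k}+\cS_{t+1}\subseteq\cX$, that is, $\tilde z_t\in\cX\ominus\cS_t$. The input case is identical after applying $K$: $\tilde v_t+K\cS_t\subseteq v^\star_{t+1\mid k}+K\cS_{t+1}\subseteq\cU$. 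At $t=0$ this only needs $\cS_0=\{0\}$ and the $j=0$ inclusion $\cS_1\subseteq\cS_1$. Only $j\le H-2$ is used, as the paper remarks.

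\textbf{Terminal steps.} I expect this to be the main point of care. We have $\tilde z_{H-1}=z^\star_{H\mid k}+A_K^{H-1}e\in\cX_f\oplus A_K^{H-1}\cS_1$, since $z^\star_{H\mid k}\in\cX_f$. Then:
\begin{itemize}
\item \eqref{eq:term_state} gives $\tilde z_{H-1}\in\cX\ominus\cS_{H-1}$;
\item \eqref{eq:term_input} gives $\tilde v_{H-1}=K_f\tilde z_{H-1}\in\cU\ominus K\cS_{H-1}$;
\item \eqref{eq:term_inv} gives $\tilde z_H=A_f\tilde z_{H-1}\in\cX_f$.
\end{itemize}
Thus every constraint of \eqref{eq:ctmpc} holds at $x_{k+1}$, which proves the first statement. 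The terminal cost condition is not needed here.

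\textbf{Probabilistic part.} Feasibility at $k+1$ can fail only on $\{e_{1\mid k}\notin\cS_1\}$. Since $e_{1\mid k}$ is the first component of the fresh trajectory $\mathbf E_k$, and $\mathbf E_k\in\cE_\theta$ implies $e_{1\mid k}\in\cS_1$ by \eqref{eq:projection}, we get
\[
\{\text{infeasible at }k+1\}\subseteq\{\mathbf E_k\notin\cE_{\hat\theta_N}\}.
\]
Conditioning on $S_N$ and feasibility at $k$, the right-hand event has probability $V_N$. This uses Assumption~\ref{ass:exchangeability}: under the online information pattern, the fresh residual has the calibration law independently of $S_N$, and it is unaffected by conditioning on feasibility at $k$, which depends only on $x_k$ and hence on past disturbances. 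On the event $\cG_N(\bbeta)$, whose probability is at least $1-\bbeta$ by Theorem~\ref{thm:beta}, we have $V_N\le\eps_{N,s}(\bbeta)$, which gives \eqref{eq:rf_cert}.

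If $\cS_1$ is a convex outer approximation of the projection, the inclusion above only becomes looser, so the argument is unchanged.
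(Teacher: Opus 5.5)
Your proof is correct and follows the paper's argument essentially verbatim. It uses the same shifted candidate $\tilde z_j=z^\star_{j+1}+A_K^je_{1\mid k}$, $\tilde v_j=v^\star_{j+1}+KA_K^je_{1\mid k}$ with the terminal step $\tilde v_{H-1}=K_f\tilde z_{H-1}$, $\tilde z_H=A_f\tilde z_{H-1}$, shift compatibility for $j\le H-2$, the three terminal inclusions at the last stage, and the inclusion of the infeasibility event in the tube-escape event. Two small points: your displayed definition should let $\tilde z_t$ run to $t=H-1$, as you in fact use in the terminal step. Your remark that the fresh residual is unaffected by conditioning on feasibility at $k$ makes explicit an information-pattern assumption that the paper leaves implicit here and states only as hypothesis \eqref{eq:conditional_update_risk} in Corollary~\ref{cor:finite_deploy}.
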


\begin{proof}
Let $(z_0^\star,\ldots,z_H^\star,v_0^\star,\ldots,v_{H-1}^\star)$ be a feasible solution at time $k$. Define a candidate solution at time $k+1$ by
\begin{align}
    \tilde z_j &= z_{j+1}^\star + A_K^j e_{1\mid k},
    &&j=0,\ldots,H-1,\label{eq:shift_z}\\
    \tilde v_j &= v_{j+1}^\star + K A_K^j e_{1\mid k},
    &&j=0,\ldots,H-2,\label{eq:shift_v}\\
    \tilde v_{H-1} &= K_f\tilde z_{H-1},
    \label{eq:shift_terminal_input}\\
    {\tilde z_H} &{=A_f\tilde z_{H-1}.}
    \label{eq:shift_terminal_state}
\end{align}
The initial condition is satisfied because $\tilde z_0=z_1^\star+e_{1\mid k}=x_{k+1}$. For $j=0,\ldots,H-2$,
\begin{equation}
    A\tilde z_j+B\tilde v_j
    =z_{j+2}^\star+A_K^{j+1}e_{1\mid k}
    =\tilde z_{j+1}.
\end{equation}
{The last dynamics equation is
\(A\tilde z_{H-1}+B\tilde v_{H-1}=A_f\tilde z_{H-1}=\tilde z_H\).}
{For $j=0,\ldots,H-2$ and any $\eta\in\cS_j$,} shift compatibility gives $A_K^j e_{1\mid k}+\eta\in\cS_{j+1}$. Since $z_{j+1}^\star\in\cX\ominus\cS_{j+1}$,
\begin{equation}
    \tilde z_j+\eta=z_{j+1}^\star+A_K^j e_{1\mid k}+\eta\in\cX,
\end{equation}
which means $\tilde z_j\in\cX\ominus\cS_j$. The same argument, using $v_{j+1}^\star\in\cU\ominus K\cS_{j+1}$, gives $\tilde v_j\in\cU\ominus K\cS_j$ for $j=0,\ldots,H-2$. {For the last stage, \eqref{eq:term_state} gives \(\tilde z_{H-1}\in\cX\ominus\cS_{H-1}\), \eqref{eq:term_input} gives \(\tilde v_{H-1}\in\cU\ominus K\cS_{H-1}\), and \eqref{eq:term_inv} gives \(\tilde z_H\in\cX_f\).} Thus the shifted sequence is feasible at time $k+1$.

Finally, $\mathbf E_k\in\cE_\theta$ implies $e_{1\mid k}\in\cS_1$. Therefore infeasibility can occur only if $\mathbf E_k\notin\cE_\theta$, whose conditional probability is bounded by the certified risk.
\end{proof}

\begin{corollary}[Finite-deployment certificate]\label{cor:finite_deploy}

Fix a deployment horizon of \(T\) MPC updates and let \(\cF_k\) denote the closed-loop information available immediately before solving CT-MPC at time \(k\).  Suppose that, whenever the CT-MPC problem is feasible, the fresh horizon residual trajectory at update \(k\) satisfies the same calibrated conditional risk bound
\begin{equation}
    \Pp\{\mathbf E_k\notin\cE_{\hat\theta_N}\mid \cF_k,S_N\}
    \le \eps_{N,s}(\bbeta)
    \label{eq:conditional_update_risk}
\end{equation}
on the calibration event \(\cG_N(\bbeta)\).  Then, on \(\cG_N(\bbeta)\),
\begin{multline}
\Pp\{\text{some feasibility or horizon-safety failure}\\
\text{in }T\text{ updates}\mid S_N\}
\le T\eps_{N,s}(\bbeta).
\label{eq:T_bound}
\end{multline}
No independence between different MPC updates is required for \eqref{eq:T_bound}; overlap of prediction windows is allowed.  What is required is the per-update conditional risk bound \eqref{eq:conditional_update_risk}.  The factor \(T\) is a union bound across receding-horizon invocations, not across horizon steps or constraint facets.  Removing this remaining factor requires additional structure, such as online conformal updating, a martingale calibration argument,  or a deterministic robust backup.

\end{corollary}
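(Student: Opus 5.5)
The plan is to reduce the deployment failure event to a union of per-update tube-escape events, and then bound each of those using the hypothesis \eqref{eq:conditional_update_risk} together with the tower property. First fix the calibration sample on \(\cG_N(\bbeta)\), so that \(\hat\theta_N\) is deterministic given \(S_N\). Suppose CT-MPC is feasible at \(k=0\); this is the standing assumption for starting deployment. For \(k=0,\ldots,T-1\), define the escape event \(F_k:=\{\mathbf E_k\notin\cE_{\hat\theta_N}\}\cap\{\text{CT-MPC feasible at }k\}\).

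The deterministic step comes next. Consider the first update \(k\) at which either a feasibility failure (infeasibility at \(k+1\)) or a horizon-safety failure (a predicted state or input leaving \(\cX\) or \(\cU\)) occurs. Updates \(0,\ldots,k\) were all feasible, so update \(k\) is feasible. Theorem~\ref{thm:horizon_safety} shows that a horizon-safety failure at a feasible update forces \(\mathbf E_k\notin\cE_{\hat\theta_N}\). Theorem~\ref{thm:rf} shows that infeasibility at \(k+1\), given feasibility at \(k\), forces \(e_{1\mid k}\notin\cS_1\), which again implies \(\mathbf E_k\notin\cE_{\hat\theta_N}\), since tube containment gives containment of the projections. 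Hence the failure event is contained in \(\bigcup_{k=0}^{T-1}F_k\). One bookkeeping point needs care: the realized \(x_{k+1}\) is exactly the first component of the fresh residual trajectory under the ancillary law \eqref{eq:ancillary_feedback} with \(e_0=0\). I will spell this identification out explicitly.

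The probabilistic step then uses \(\Pp\{F_k\mid S_N\}=\Ee[\one\{\text{feasible at }k\}\,\Pp\{\mathbf E_k\notin\cE_{\hat\theta_N}\mid\cF_k,S_N\}\mid S_N]\le\eps_{N,s}(\bbeta)\). This holds because the feasibility indicator is \(\cF_k\)-measurable and \eqref{eq:conditional_update_risk} applies on that event. Boole's inequality over the \(T\) updates then gives \eqref{eq:T_bound}, with no independence needed between updates. Overlapping windows are harmless because only the per-update conditional bound enters the argument.

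I expect the main obstacle to be the measurability and first-failure bookkeeping. Feasibility at \(k\) must be \(\cF_k\)-measurable, and the escape event must be defined only on feasible updates, since \eqref{eq:conditional_update_risk} is assumed only there. To avoid conditioning on events that may have probability zero, I will work with indicators rather than conditional probabilities given feasibility.
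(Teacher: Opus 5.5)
Your proposal is correct and follows essentially the same route as the paper. Theorems~\ref{thm:horizon_safety} and~\ref{thm:rf} place each update's failure inside the tube-escape event $\{\mathbf E_k\notin\cE_{\hat\theta_N}\}$, the tower property over $\cF_k$ bounds each term by $\eps_{N,s}(\bbeta)$, and Boole's inequality over the $T$ updates gives \eqref{eq:T_bound}. Your first-failure argument and the $\cF_k$-measurable feasibility indicator just spell out what the paper compresses into ``conditional on feasibility before the update,'' including the implicit requirement that CT-MPC be feasible at $k=0$.
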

\begin{proof}

Let \(F_k\) be the event that update \(k\) produces either a state/input horizon-safety failure or a one-step recursive-feasibility failure, conditional on feasibility before the update.  By Theorems~\ref{thm:horizon_safety} and \ref{thm:rf}, \(F_k\subseteq\{\mathbf E_k\notin\cE_{\hat\theta_N}\}\) on the deterministic CT-MPC inclusions.  Therefore \(\Pp(F_k\mid \cF_k,S_N)\le \eps_{N,s}(\bbeta)\) on \(\cG_N(\bbeta)\).  Summing \(\Pp(F_k\mid S_N)=\Ee[\Pp(F_k\mid \cF_k,S_N)\mid S_N]\) over \(k=0,\ldots,T-1\) gives \eqref{eq:T_bound}.

\end{proof}

A deterministic route to infinite-horizon hard safety would be to attach a robust backup tube to the calibrated controller. Let \(\bar\cW\) be a hard disturbance support set and suppose there are a backup feedback \(\kappa_b\) and a robust control-invariant safe set \(\cX_b\subseteq\cX\) such that
\begin{equation}
    \kappa_b(x)\in\cU,
    \qquad
    Ax+B\kappa_b(x)+\bar\cW\subseteq\cX_b,
    \qquad x\in\cX_b .
    \label{eq:backup_invariance}
\end{equation}
The CT-MPC problem can then be equipped with the one-step guard
\begin{equation}
    z_{1\mid k}\in \cX_b\ominus \bar\cW,
    \label{eq:backup_guard}
\end{equation}
which is affine in the nominal first input because \(z_{1\mid k}=Ax_k+Bv_{0\mid k}\). If the calibrated shift condition holds, the controller continues in the less conservative CT-MPC mode by Theorem~\ref{thm:rf}. If a tube-escape event is detected after measuring \(x_{k+1}\), the guard implies \(x_{k+1}\in\cX_b\) for every \(w_k\in\bar\cW\), and the controller may switch permanently, or until recertification, to \(\kappa_b\). The invariance \eqref{eq:backup_invariance} then gives safety for all subsequent times without accumulating the factor in \eqref{eq:T_bound}. 

In this hybrid implementation, the controller operates under a dual-tube paradigm: the tight, data-calibrated CT-MPC tube serves as the primary ``performance inner bound,'' dictating nominal operation with high probability, while the conservative robust backup tube serves purely as an ``infinite-horizon safety outer bound.'' This ensures the system does not pay the performance penalty of worst-case robust tightenings during nominal operation, but still possesses a mathematically guaranteed fail-safe if an extreme tail event forces a tube escape. If \(\bar\cW\) is itself only probabilistically calibrated, then this deterministic conclusion is replaced by the corresponding repeated-use certificate for the backup set.

\subsection{Practical stability on certified tube-containment intervals}
The next theorem records the value-function decrease inherited from tube MPC. It is stated in a practical form because measured-state reinitialization perturbs the shifted nominal tail by $e_{1\mid k}$. {Here \(\rad(S):=\sup_{s\in S}\|s\|\).}

\begin{assumption}[Lipschitz cost on the feasible tube]\label{ass:lipschitz}
The stage and terminal costs are continuous, positive definite, and locally Lipschitz on the compact feasible region induced by \eqref{eq:ctmpc}. {Moreover, the stage cost dominates the measured state: there exists a class-$\mathcal K_\infty$ function $\alpha_\ell$ such that $\ell(x,v)\ge \alpha_\ell(\|x\|)$ for all feasible $(x,v)$.} There are class-$\mathcal K_\infty$ functions $\alpha_1,\alpha_2$ such that
\begin{equation}
    \alpha_1(\|x\|)\le J_H^\star(x;\theta)\le \alpha_2(\|x\|)
    \label{eq:value_bounds}
\end{equation}
for every feasible $x$.
\end{assumption}

\begin{theorem}[Practical value decrease]\label{thm:stability}
Under the assumptions of Theorem~\ref{thm:rf} and Assumption~\ref{ass:lipschitz}, there exists a finite constant $L_\theta$ such that, whenever \eqref{eq:ctmpc} is feasible at $k$ and $e_{1\mid k}\in\cS_1$,
\begin{equation}
    J_H^\star(x_{k+1};\theta)-J_H^\star(x_k;\theta)
    \le -\ell(x_k,v_{0\mid k}^\star)+L_\theta\|e_{1\mid k}\|.
    \label{eq:value_decrease}
\end{equation}
{Consequently, if \(\|e_{1\mid k}\|\le \delta_\theta:=\rad(\cS_1)\) on an interval of certified tube containment, then
\[
    J_H^\star(x_{k+1};\theta)-J_H^\star(x_k;\theta)
    \le -\alpha_\ell(\|x_k\|)+L_\theta\delta_\theta .
\]
Together with \eqref{eq:value_bounds}, this gives the usual Lyapunov practical-stability estimate with ultimate radius no larger than
\(\alpha_1^{-1}\!\circ\alpha_2\!\circ\alpha_\ell^{-1}(L_\theta\delta_\theta)\), up to the standard compact-set constants. In particular, the bound shrinks as \(\rad(\cS_1)\) shrinks.}
\end{theorem}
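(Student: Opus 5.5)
We will use the shifted candidate $(\tilde{\mathbf z},\tilde{\mathbf v})$ built in the proof of Theorem~\ref{thm:rf} as an upper bound for $J_H^\star(x_{k+1};\theta)$. Compare its cost with that of the unperturbed shifted tail, i.e.\ the same construction with $e_{1\mid k}=0$: $\bar z_j=z_{j+1}^\star$ for $j=0,\ldots,H-1$, $\bar v_j=v_{j+1}^\star$ for $j\le H-2$, $\bar v_{H-1}=K_f z_H^\star$, and $\bar z_H=A_f z_H^\star$. The feasible-region hypothesis of Assumption~\ref{ass:lipschitz} implies $z_H^\star\in\cX_f$, so the nominal Lyapunov inequality \eqref{eq:terminal_cost} applies at $z_H^\star$. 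The standard telescoping then gives
\[
\sum_{j=0}^{H-1}\ell(\bar z_j,\bar v_j)+\ell_f(\bar z_H)\le J_H^\star(x_k;\theta)-\ell(x_k,v_{0\mid k}^\star),
\]
since $z_0^\star=x_k$. This is the exact-shift decrease, and the remaining work is to control the perturbation.

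Second, we bound the difference between the cost of $(\tilde{\mathbf z},\tilde{\mathbf v})$ and that of $(\bar{\mathbf z},\bar{\mathbf v})$. For $j\le H-1$ we have $\tilde z_j-\bar z_j=A_K^je_{1\mid k}$, and for $j\le H-2$ we have $\tilde v_j-\bar v_j=KA_K^je_{1\mid k}$. At the terminal stage, $\tilde v_{H-1}-\bar v_{H-1}=K_f(\tilde z_{H-1}-z_H^\star)$ and $\tilde z_H-\bar z_H=A_f(\tilde z_{H-1}-z_H^\star)$. The one mismatch needs care: $\tilde z_{H-1}=z_H^\star+A_K^{H-1}e_{1\mid k}$, so every deviation is linear in $e_{1\mid k}$, with gain at most $c:=\max_j(\|A_K^j\|,\|KA_K^j\|,\|K_fA_K^{H-1}\|,\|A_fA_K^{H-1}\|)$. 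All these points lie in a compact set: the candidate is feasible by Theorem~\ref{thm:rf}, and the unperturbed points lie in $\cX$, $\cU$ and $\cX_f$ together with their images under $K_f$ and $A_f$, plus the bounded shift $c\,\rad(\cS_1)$. Local Lipschitz continuity on a compact set gives uniform constants $L_\ell$ and $L_f$ there. This yields the bound with
\[
L_\theta:=c\,(2HL_\ell+L_f).
\]
Combining this with $J_H^\star(x_{k+1};\theta)\le\tilde J$ proves \eqref{eq:value_decrease}.

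Third, for the consequence we use $\ell(x_k,v)\ge\alpha_\ell(\|x_k\|)$ and $\|e_{1\mid k}\|\le\rad(\cS_1)$ to get the decrease with offset $L_\theta\delta_\theta$. The practical-stability estimate then follows the usual argument. Outside the ball $\|x\|\le\alpha_\ell^{-1}(L_\theta\delta_\theta)$ the value strictly decreases. Inside it, $J_H^\star\le\alpha_2(\alpha_\ell^{-1}(L_\theta\delta_\theta))$ up to the one-step increment, and sublevel sets of $J_H^\star$ are trapped by \eqref{eq:value_bounds}. Hence $\|x\|\le\alpha_1^{-1}\circ\alpha_2\circ\alpha_\ell^{-1}(L_\theta\delta_\theta)$ modulo the compact-set constants the statement allows.

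The main obstacle is to make the Lipschitz constant legitimate. Both the candidate and the unperturbed tail, including $A_f z_H^\star$ and $K_f z_H^\star$, must lie in one compact set on which the costs are Lipschitz, and this set must not depend on $k$. We plan to take the compact set $\cX\oplus c\,\cS_1$, together with the matching input set, and to read "locally Lipschitz on the compact feasible region" as covering this fixed enlargement.
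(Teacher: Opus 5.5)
Your proposal is correct and follows the paper's own argument. The steps match: the shifted candidate of Theorem~\ref{thm:rf}, the exact-shift decrease via the terminal Lyapunov inequality \eqref{eq:terminal_cost}, a perturbation linear in $e_{1\mid k}$ that is absorbed by a Lipschitz constant on a fixed compact set, and then the standard practical-stability argument, which you spell out more explicitly than the paper's sketch.

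The compactness issue you flag is also glossed over in the paper. Note that when $0\in\cS_1$, rerunning the proof of Theorem~\ref{thm:rf} with $e_{1\mid k}$ replaced by $0$ shows the unperturbed tail is itself feasible for \eqref{eq:ctmpc} from $z_{1\mid k}^\star$. It therefore already lies in the feasible region, and no enlargement of the Lipschitz domain is needed.
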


\begin{proof}
Evaluate the cost of the shifted candidate constructed in \eqref{eq:shift_z}--\eqref{eq:shift_terminal_input}. If $e_{1\mid k}=0$, the standard terminal Lyapunov inequality \eqref{eq:terminal_cost} gives
\begin{equation}
    J_H^\star(x_{k+1};\theta)-J_H^\star(x_k;\theta)
    \le -\ell(x_k,v_{0\mid k}^\star).
\end{equation}
For nonzero $e_{1\mid k}$, every shifted state and input differs from the zero-error shifted candidate by a linear image of $e_{1\mid k}$. Since all feasible candidates lie in a compact set and the costs are Lipschitz there, the additional cost is bounded by $L_\theta\|e_{1\mid k}\|$ for a finite constant $L_\theta$. Optimality of $J_H^\star(x_{k+1};\theta)$ gives \eqref{eq:value_decrease}. {The practical-stability conclusion follows by combining the last inequality with the stage-cost lower bound, \eqref{eq:value_bounds}, and the standard Lyapunov argument for difference inequalities with a bounded additive term.}
\end{proof}

The theorem separates the two roles of the calibrated tube. Its cross-sections must be small enough for a meaningful practical stability bound, while its trajectory event must have high enough probability for the stability statement to apply frequently.

\section{Collective Allocation Versus Modular Allocation}\label{sec:allocation}
\subsection{MPC tightening objective}
For polyhedral constraints
\begin{equation}
    \cX=\{x:F_xx\le b_x\},\qquad
    \cU=\{u:F_uu\le b_u\},
\end{equation}
constraint tightening by cross-sections $\cS_t$ is equivalent to replacing the right-hand sides by
\begin{align}
    b_{x,j}-h_{\cS_t}(F_{x,j}^\top),\qquad
    b_{u,j}-h_{K\cS_t}(F_{u,j}^\top),
\end{align}
where $h_S(a):=\sup_{s\in S}a^\top s$ is the support function. A natural tube-size objective is therefore
\begin{equation}
\begin{aligned}
    \Phi_{\rm MPC}(\theta)=&
    \sum_{t=1}^H\sum_j \lambda_{jt} h_{\cS_t(\theta)}(F_{x,j}^\top)\\
    &+\sum_{t=0}^{H-1}\sum_j \mu_{jt} h_{K\cS_t(\theta)}(F_{u,j}^\top).
\end{aligned}
\label{eq:phi_mpc}
\end{equation}
with nonnegative weights. Other monotone objectives can be used, including terminal-set volume loss and average closed-loop cost over design states.

\subsection{Dominance over modular discard allocation}
Consider rectangular score tubes of the form
\begin{equation}
    \cE_q=\{\mathbf e:s_h(\mathbf e)\le q_h,\;h=1,\ldots,M\},
    \label{eq:rect_scores}
\end{equation}
where $h$ indexes all time, state, input, obstacle, and facet scores. A modular method chooses $q_h$ separately and may discard $r_h$ calibration trajectories in score $h$. Define
\begin{equation}
    D_h(q_h):=\{i:s_h(\mathbf E^i)>q_h\},
    \qquad |D_h(q_h)|\le r_h.
\end{equation}
The trajectory-level discard set of the same tube is
\begin{equation}
    D_\cup(q):=\bigcup_{h=1}^M D_h(q_h).
\end{equation}

\begin{proposition}[Collective dominance]\label{prop:dominance}
Let $q^{\rm mod}$ be any modular tube satisfying $|D_h(q_h^{\rm mod})|\le r_h$ for all $h$. Then
\begin{equation}
    |D_\cup(q^{\rm mod})|\le \sum_{h=1}^M r_h.
\end{equation}
Consequently, $q^{\rm mod}$ is feasible for the collective discard problem
\begin{equation}
    \min_q \Phi_{\rm MPC}(q)
    \quad \text{s.t.}\quad
    \sum_{i=1}^N\one\{\mathbf E^i\notin\cE_q\}\le r_\cup
    \label{eq:collective_rect}
\end{equation}
whenever $r_\cup\ge\sum_h r_h$. Hence the collective optimum has no larger value of any shared monotone tightening objective $\Phi_{\rm MPC}$.
\end{proposition}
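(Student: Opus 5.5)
The argument has three steps: a union bound on discard sets, a feasibility check against the collective problem, and an optimality comparison.

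\textbf{Step 1: union bound.} By definition $D_\cup(q^{\rm mod})=\bigcup_{h=1}^M D_h(q^{\rm mod}_h)$. Subadditivity of cardinality over finite unions gives
\[
|D_\cup(q^{\rm mod})|\le\sum_{h=1}^M |D_h(q^{\rm mod}_h)|\le \sum_{h=1}^M r_h .
\]
This is the whole counting content. Equality holds only when the per-score discard sets are disjoint, which makes precise the remark that a trajectory extreme in many scores consumes only one collective discard.

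\textbf{Step 2: the collective discard count equals $|D_\cup|$.} Using the rectangular form \eqref{eq:rect_scores},
\[
\mathbf E^i\notin\cE_q \iff \exists h:\ s_h(\mathbf E^i)>q_h \iff i\in D_\cup(q).
\]
Hence $\sum_{i=1}^N\one\{\mathbf E^i\notin\cE_q\}=|D_\cup(q)|$ for every $q$. Combined with Step 1, whenever $r_\cup\ge\sum_h r_h$ we get $|D_\cup(q^{\rm mod})|\le r_\cup$, so $q^{\rm mod}$ satisfies the constraint of \eqref{eq:collective_rect}. If the collective problem also carries the shift-compatibility constraint $\Theta_{\rm sh}$ from \eqref{eq:collective_tube_design}, I will take the comparison to be among modular tubes in the same admissible family, so that this constraint holds for $q^{\rm mod}$ by hypothesis.

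\textbf{Step 3: optimality comparison.} Let $q^{\rm col}$ be an optimum of \eqref{eq:collective_rect}, or work with the infimum if no minimizer exists. Since $q^{\rm mod}$ is feasible by Step 2, $\Phi_{\rm MPC}(q^{\rm col})\le\Phi_{\rm MPC}(q^{\rm mod})$. This holds for any objective evaluated identically on both tubes, so monotonicity is not needed for the inequality itself. Monotonicity in $q$ only explains why the comparison is meaningful: enlarging thresholds never decreases tightening, as in \eqref{eq:phi_mpc}, where the cross-sections grow with $q$ and their support functions grow accordingly.

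\textbf{Main obstacle.} There is no real obstacle; the proof is elementary. The only care needed is the exact identification in Step 2 of trajectory-level violation with membership in the union of the per-score discard sets, which relies on the tube being the intersection of the score constraints. A secondary point is making explicit that both problems optimize over the same parameter family.
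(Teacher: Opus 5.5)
Your proposal is correct and follows the same route as the paper's proof: you identify trajectory-level violation of the rectangular tube with membership in $D_\cup(q)$, bound $|D_\cup(q^{\rm mod})|$ by subadditivity, and conclude that the collective feasible set contains $q^{\rm mod}$. Your side remarks are accurate refinements that the paper leaves implicit: monotonicity of $\Phi_{\rm MPC}$ is not actually needed for the inequality, and the comparison presumes both problems range over the same admissible family, e.g.\ with the constraint $\Theta_{\rm sh}$.
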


\begin{proof}
A trajectory violates \eqref{eq:rect_scores} if and only if it violates at least one score threshold. Thus the set of violating trajectory indices is the union $D_\cup(q)$. Its cardinality is no larger than the sum of the individual cardinalities. Therefore the modular tube is feasible for \eqref{eq:collective_rect} with aggregate discard budget $r_\cup$. Minimization over a feasible set containing the modular tube cannot yield a worse objective value.
\end{proof}

Proposition~\ref{prop:dominance} is an optimization statement, independent of probability. The probabilistic advantage is that the collective tube is certified as one joint event; it does not spend risk separately across $M$ blocks.

The proposition should not be read as saying that any collectively optimized rectangular tube automatically has certificate complexity \(r_\cup+1\).  That equality is true for a scalar shared score with a fixed order statistic.  For a vector threshold \(q\), or for a tube shape chosen by optimization, the collective certificate must use the support/compression complexity of the complete design rule.  The practical advantage of collective allocation appears when dependence causes the same residual trajectories to be extreme in several blocks, or when a shared score/tube template deliberately exploits that dependence with a small certified complexity.

\subsection{Certified support size is the relevant complexity}

The probability certificate in Theorem~\ref{thm:beta} requires a support size that is fixed
before the calibration sample is interpreted. Thus, in the rest of the paper,
\(s\) should be understood as a certified complexity parameter: either the
exact fixed boundary size of the tube calibrator, or a conservative upper bound
on the number of calibration trajectories that can define the tube.

For example, in a shared-trajectory order-statistic tube, the same residual
trajectory may be extreme at several prediction times or constraint facets.
This overlap is useful for tube design, because it allows the collective tube
to adapt to the dependence structure of the trajectory errors. However, unless
one has a separate theorem for random boundary sizes, the observed number of
active trajectories in a particular calibration sample should {\em not} be inserted
directly into the beta-binomial formula.\footnote{Deriving exact bounds based on the observed random boundary size requires knowledge of the boundary size's probability mass function over augmented samples, which is generally intractable for complex tubes. Hence, we rely on a fixed $s$ or structural upper bounds.}

Accordingly, this paper uses the fixed-\(s\) certificate of Theorem~\ref{thm:beta}. If a
tube designer has a variable or heuristic support set, one should either certify
it with a conservative fixed upper bound on \(s\), or use the stable-compression
fallback in Appendix~\ref{app:compression} . This keeps the statistical certificate separate from
the MPC arguments: once a tube has been certified, the recursive-feasibility
and stability proofs depend only on the resulting tube inclusions.

\section{Synthesis of Shift-Compatible Calibrated Tubes}\label{sec:synthesis}
The previous sections are stated in terms of set inclusions. This section shows how those inclusions can be enforced in a finite-dimensional tube design. The purpose is to make the proposed controller implementable with the same online complexity class as standard tube MPC.

\subsection{Polytopic cross-sections}
Let the tube cross-sections have a common polytopic template
\begin{equation}
    \cS_t(\alpha_t)=\{e:G e\le \alpha_t\},
    \qquad t=1,\ldots,H,
    \label{eq:poly_tube}
\end{equation}
where the rows of $G$ are fixed and the right-hand sides $\alpha_t$ are calibrated. For a vector $a$, the support function $h_{\cS_t(\alpha_t)}(a)$ is the value of a small linear program. If the directions in $G$ include the relevant constraint normals and their images under $A_K$, then the following test is exact for the template and conservative otherwise.

\begin{proposition}[Support-function shift test]\label{prop:support_shift}
The tube \eqref{eq:poly_tube} is shift-compatible if, for every $j=0,\ldots,H-1$ and every row $g_\ell^\top$ of $G$,
\begin{equation}
    h_{\cS_1}( (A_K^j)^\top g_\ell )
    + h_{\cS_j}(g_\ell)
    \le \alpha_{j+1,\ell},
    \label{eq:support_shift}
\end{equation}
where $\cS_0=\{0\}$ and $h_{\cS_0}\equiv0$.
\end{proposition}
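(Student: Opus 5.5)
The plan is to verify the set inclusion \eqref{eq:shift_compat} facet by facet, using support-function calculus for the polytopic template. Fix $j\in\{0,\ldots,H-1\}$ and take an arbitrary point $y=A_K^j a+b$ with $a\in\cS_1$ and $b\in\cS_j$. Since $\cS_{j+1}=\{e:Ge\le\alpha_{j+1}\}$, it suffices to show $g_\ell^\top y\le\alpha_{j+1,\ell}$ for every row $\ell$; membership in an $\mathcal H$-polytope is exactly a conjunction of facet inequalities, so there is no loss in checking rows one at a time.

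For a fixed row, write $g_\ell^\top y=\big((A_K^j)^\top g_\ell\big)^\top a+g_\ell^\top b$. Bound the first term by $h_{\cS_1}\big((A_K^j)^\top g_\ell\big)$, by definition of the support function, since $a\in\cS_1$. Bound the second by $h_{\cS_j}(g_\ell)$, since $b\in\cS_j$. For $j=0$ we have $\cS_0=\{0\}$, so $b=0$ and the convention $h_{\cS_0}\equiv0$ is consistent. Adding the two bounds and invoking \eqref{eq:support_shift} gives $g_\ell^\top y\le\alpha_{j+1,\ell}$. Hence $y\in\cS_{j+1}$, and since $y$ was an arbitrary point of $A_K^j\cS_1\oplus\cS_j$, the inclusion holds.

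This is the standard identity $h_{MA\oplus B}(g)=h_A(M^\top g)+h_B(g)$, used here only as an upper bound. The only point needing care is that the support functions be finite, so that the inequalities are meaningful. I would dispose of this by noting that the cross-sections are compact, being compact convex outer approximations per Section~\ref{sec:collective_tubes}. Equivalently, each $h$ is the optimal value of a bounded linear program, which is attained.

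I do not expect a real obstacle: the argument is a direct one-line support-function computation. The remark about exactness is not part of the claim, since it asserts only sufficiency, so I would not prove a converse. If desired, I would note briefly that the test is also necessary whenever the sum of support values equals the support function of the Minkowski sum along the template directions $g_\ell$. That equality always holds, and the conservatism arises only because the template may not capture the shape of $A_K^j\cS_1\oplus\cS_j$ in directions outside the row set of $G$.
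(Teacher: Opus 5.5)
Your proof is correct and is essentially the paper's argument. The paper invokes the Minkowski-sum identity $h_{A_K^j\cS_1\oplus\cS_j}(g)=h_{\cS_1}((A_K^j)^\top g)+h_{\cS_j}(g)$ and reads \eqref{eq:support_shift} as a facetwise support bound, whereas you use only the trivial ``$\le$'' half of that identity on an arbitrary point $A_K^ja+b$ and compare directly with $\alpha_{j+1,\ell}$ (rather than with $h_{\cS_{j+1}}(g_\ell)$); your side remark that the test is also necessary for nonempty cross-sections, hence exact within this template, is likewise correct.
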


\begin{proof}
For any compact convex sets $S_1,S_j$ and any vector $g$, the support function of $A_K^jS_1\oplus S_j$ in direction $g$ is
\begin{equation}
    h_{A_K^jS_1\oplus S_j}(g)
    =h_{S_1}((A_K^j)^\top g)+h_{S_j}(g).
\end{equation}
Thus \eqref{eq:support_shift} says that the support of $A_K^j\cS_1\oplus\cS_j$ in every template facet direction is no larger than the corresponding support of $\cS_{j+1}$. This is precisely the inclusion $A_K^j\cS_1\oplus\cS_j\subseteq\cS_{j+1}$ when the template facets describe $\cS_{j+1}$.
\end{proof}

For axis-aligned boxes, ellipsoids with fixed shape, zonotopes with fixed generators, or homothetic polytopes, \eqref{eq:support_shift} reduces to linear or second-order cone inequalities in the tube scaling variables. Hence shift compatibility can be imposed during calibration rather than checked after the fact.

\subsection{Offline calibration as a mixed-integer tube program}
For the polytopic tube \eqref{eq:poly_tube}, a residual trajectory $\mathbf E^i$ is accepted if
\begin{equation}
    G E_t^i\le \alpha_t,
    \qquad t=1,\ldots,H.
\end{equation}
With a discard budget $r$, the collective design \eqref{eq:collective_tube_design} can be written as
\begin{subequations}\label{eq:milp_design}
\begin{align}
    \min_{\alpha,d}\quad & \Phi_{\rm MPC}(\alpha)\label{eq:milp_obj}\\
    \text{s.t.}\quad
        &G E_t^i\le \alpha_t+C d_i\mathbf 1,
        &&\forall i,t,\label{eq:milp_accept}\\
        &\sum_{i=1}^N d_i\le r,
        \qquad d_i\in\{0,1\},\label{eq:milp_discard}\\
        &\text{support-shift constraints }\eqref{eq:support_shift},\label{eq:milp_shift}\\
        &\text{terminal inclusions }\eqref{eq:terminal_ass}.\label{eq:milp_terminal}
\end{align}
\end{subequations}
Here $C$ is a sufficiently large constant. Without discards, \eqref{eq:milp_design} is a convex program for common tube templates. With discards, it is a mixed-integer program solved offline; its online consequence remains the convex MPC problem \eqref{eq:ctmpc}. This offline-online split is important, since dependence-aware calibration need not increase the real-time MPC complexity.

\begin{remark}[Scalability and Practical Synthesis] 
The MILP formulation \eqref{eq:milp_design} introduces $N$ binary variables, and its
worst-case  complexity is of  combinatorial order. For this reason, 
problem \eqref{eq:milp_design}  is best viewed as an offline shape-optimization tool only for small/moderate \((N,H,r)\), rather than a mandatory calibration step. 
 For large-scale calibration datasets (such as the $N=2500$ used in Section~\ref{sec:numerics}), the scalar shared-trajectory score tube is the recommended practical synthesis tool. It bypasses integer optimization entirely: evaluating $N$ trajectory scores requires $\mathcal{O}(N H n_g)$ operations, and the calibration threshold is found via a highly scalable $\mathcal{O}(N \log N)$ sorting operation.
 More generally, problem \eqref{eq:milp_design}  can be replaced 
 by greedy/relaxed discarding approaches. If the approach 
 possesses a ``proper boundary,'' then we can certify it via Theorem~\ref{thm:beta}.
 If it  has a stable compression set but not a proper boundary, then the fallback certificate in Appendix~\ref{app:compression} can replace Theorem~\ref{thm:beta}. In all cases,  the MPC feasibility and stability theorems remain unchanged once the resulting tube is certified.
\end{remark}

\subsection{Implementation template}
A CT-MPC implementation consists of the following operations.
\begin{enumerate}
\item Select $K$, a nominal cost, terminal ingredients, and a tube template whose projections can be used in Pontryagin differences.
\item Generate held-out residual trajectories under the same predictor and information pattern used online.
\item Solve \eqref{eq:collective_tube_design} or \eqref{eq:milp_design}, enforcing shift compatibility and terminal inclusions.
\item Compute the collective support size $s$ or a conservative upper bound, and evaluate $\eps_{N,s}(\bbeta)$.
\item At each sampling time, solve the deterministic tightened MPC problem \eqref{eq:ctmpc}; apply the first input; repeat.
\end{enumerate}
The online step is therefore indistinguishable from a robust tube MPC problem with data-designed cross-sections. The difference is in the offline selection and certification of the tube.

\section{Numerical Study}\label{sec:numerics}

This section contains two tests.  First, we measure the empirical probability that a fresh prediction-error trajectory leaves the calibrated tube.  Second, we run the resulting tightened controllers in closed loop and compare the deterministic feasible domains.  All tests, numerical tables, and graphics reported in this manuscript are generated by the accompanying reproducibility code available  at \url{https://github.com/beppe969/CT_MPC_code}


\subsection{Benchmark system, disturbance law, and controllers}

The linear benchmark is the sampled double integrator

\begin{equation}
    A=\begin{bmatrix}1&1\\0&1\end{bmatrix},
    \qquad
    B=\begin{bmatrix}0.5\\1\end{bmatrix},
\end{equation}

with hard constraints $|p|\le 5$, $|v|\le2.5$, and $|u|\le1.2$.  The ancillary gain is

\begin{equation}
    K=\begin{bmatrix}-0.593&-1.121\end{bmatrix},
\end{equation}
for which $A+BK$ is Schur.  The prediction horizon is $H=15$.  The online MPC problem is a deterministic tightened tube-MPC problem with a tube-dependent polyhedral terminal set.  In the reproducibility code this set is generated from the terminal gain $K_f=[-1\;-1.5]^\top$ by the same linear support-function inclusions used in Assumption~\ref{ass:terminal}.
  The online objective is an $\ell_1$ tracking objective.  This objective was chosen to keep the replication code independent of commercial LP solvers beyond SciPy/HiGHS; the tightening and feasibility comparisons do not rely on this particular cost.

The disturbance generator is non-Gaussian and temporally dependent.  For a prediction horizon, let

\begin{equation}
    \xi_t=\sqrt{\rho}\,\xi_c+\sqrt{1-\rho}\,\xi_t^0,
\end{equation}

where $\xi_c$ and $\xi_t^0$ are independent Student-$t$ variables with four degrees of freedom, normalized to unit variance.  The disturbance is

\begin{equation}
    w_t=\begin{bmatrix}0.075\\0.045\end{bmatrix}\xi_t+
    \begin{bmatrix}0.020\eta_{p,t}\\0.024\eta_{v,t}\end{bmatrix},
    \label{eq:numerical_disturbance}
\end{equation}

where $\eta_{p,t}$ and $\eta_{v,t}$ are independent standard Gaussian variables.  The scalar $\rho\in[0,1]$ controls temporal dependence.  Each reported design uses $2500$ independent tuning trajectories for score geometry, $2500$ independent calibration trajectories for the finite-sample certificate, and $8\cdot10^4$ independent test trajectories for the empirical risk estimates.  At risks near $0.05$, the Monte Carlo standard error is at most $7.7\cdot10^{-4}$.

We compare four tubes.  The collective tube uses tuning-split scales

\begin{equation}
    \sigma_{t,j}=Q_{0.75}\{|\widetilde E^\nu_{t,j}|:\nu\in\cI_{\rm tune}\}
\end{equation}

and calibration scores

\begin{equation}
    S^i=\max_{t=1,\ldots,H}\max_{j\in\{p,v\}}
    \frac{|E^i_{t,j}|}{\sigma_{t,j}}.
\end{equation}
We set $r=100$ and choose $q$ as the $(r+1)$-st largest calibration score, equivalently the noninterpolated upper empirical quantile that accepts $N-r=2400$ calibration trajectories.  Thus $s=r+1=101$.  
With $N=2500$ and confidence parameter $\beta=0.05$, the beta-binomial certificate gives $\eps_{2500,101}(0.05)=0.0471<0.05$.

As a stronger recent baseline, we implemented the joint-in-time conformal ellipsoidal construction used in conformal SMPC for stochastic linear systems in \cite{Vogel2026}.  On the tuning split, it estimates $\mu_t$ and $\Sigma_t$ for each prediction time.  On the calibration split, it uses the trajectory score

\begin{equation}
    S^i_{\rm ell}=\max_{t=1,\ldots,H}
    (E^i_t-\mu_t)^\top\Sigma_t^{-1}(E^i_t-\mu_t),
\end{equation}
calibrated at level $1-\eps$.  The resulting ellipsoid is then projected to the symmetric box half-widths $|\mu_{t,j}|+\sqrt{q_{\rm ell}[\Sigma_t]_{jj}}$ so that it can be used by the same deterministic tube-MPC solver.  This baseline calibrates a joint trajectory event and is therefore substantially stronger than a marginal Bonferroni allocation.  The third tube is the Bonferroni tube $Q_{1-\eps/(2H)}\{|E^i_{t,j}|\}$ with $\eps=0.05$.  The fourth is the component-wise sample envelope, included only as a robust-conservatism reference.
%

Before solving the collective MPC problem, the raw score box $b^{\rm raw}_{t,j}=q\sigma_{t,j}$ is replaced by its smallest componentwise shift-compatible box closure.  For symmetric boxes this closure keeps $\bar b_1=b^{\rm raw}_1$ and applies
\begin{equation}
    \bar b_{j+1}=\max\{b^{\rm raw}_{j+1},\ \bar b_j+|A_K^j|\bar b_1\},
    \qquad j=1,\ldots,H-1,
    \label{eq:numerical_shift_closure}
\end{equation}
componentwise.  Since the closed tube contains the raw calibrated tube, the same trajectory-level risk certificate remains valid for the MPC tube.  The open-loop risk table reports the calibrated raw score tubes, while the closed-loop MPC and feasible-domain computations use the closed collective tube.

With tolerance $10^{-8}$, the closed collective tubes have maximum support-shift residual $0$ for all tested values of $\rho$; at $\rho=0.85$ the unclosed score box would have residual $1.918\cdot10^{-1}$ and the largest added half-width in the closure is $2.999\cdot10^{-1}$.  The terminal polytope generated with $K_f=[-1\;-1.5]^\top$ satisfies the terminal safety, shifted-state, shifted-input, and invariance inclusions with nonpositive residuals to numerical precision; at $\rho=0.85$ the remaining terminal input margin is $2.931\cdot10^{-1}$.  

\subsection{Open-loop tube calibration test}

Table~\ref{tab:extended_open_loop} reports the empirical risks of the three calibrated comparison tubes and also lists the component-wise envelope risk as a robust-conservatism reference; Figure~\ref{fig:dependence_sweep} plots all four risk curves.  The collective tube remains below the target joint risk in all tested dependence regimes.  The joint conformal ellipsoid baseline is also safe and much stronger than the sample envelope, but its box projection is larger than the collective box tube.  At $\rho=0.85$, the raw collective calibration reduces the sum of state-error half-widths by $35.5\%$ and the mean rectangular cross-section area by $54.6\%$ relative to Bonferroni; the joint conformal ellipsoid reduces them by $20.6\%$ and $33.3\%$, respectively.

\begin{table}[t]
\centering
\begingroup
\caption{Open-loop tube calibration test. Risks are empirical probabilities that some component of the horizon error trajectory leaves the designed box tube. The envelope column is included only as a robust-conservatism reference. Reductions are for the collective tube relative to the Bonferroni tube.}
\label{tab:extended_open_loop}
\setlength{\tabcolsep}{3pt}
\begin{tabular}{ccccccc}
\toprule
$\rho$ & Coll. risk & JCP risk & Bonf. risk & {Env. risk} & Sum red. & Area red.\\
\midrule
0.00 & 0.0467 & 0.0261 & 0.0254 & {0.0059} & 17.2\% & 31.3\%\\
0.30 & 0.0426 & 0.0195 & 0.0176 & {0.0055} & 20.2\% & 35.8\%\\
0.60 & 0.0392 & 0.0151 & 0.0114 & {0.0018} & 27.1\% & 45.1\%\\
0.85 & 0.0388 & 0.0162 & 0.0077 & {0.0039} & 35.5\% & 54.6\%\\
0.95 & 0.0409 & 0.0195 & 0.0060 & {0.0009} & 45.4\% & 66.3\%\\
\bottomrule
\end{tabular}
\endgroup
\end{table}

\begin{figure*}[t]
\centering
\includegraphics[width=0.92\textwidth]{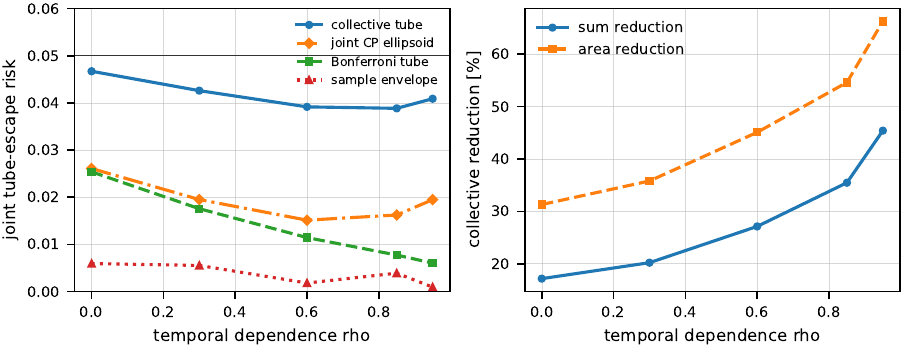}
\caption{Open-loop tube test. Left: empirical joint tube-escape probability for the collective tube, joint conformal ellipsoid (JCP) baseline, Bonferroni tube, and sample envelope; the horizontal line is the target joint risk $0.05$. Right: reduction of the collective tube relative to the Bonferroni tube.}
\label{fig:dependence_sweep}
\end{figure*}

Figure~\ref{fig:tube_profiles} shows the raw calibrated tube cross-sections for $\rho=0.85$, before the collective shift-compatible closure used by the MPC solver. Here \(b_p(t)\) and \(b_v(t)\) denote the position- and velocity-error half-widths of the raw box cross-section at prediction step \(t\), respectively. The lower panel translates each raw tube into the tightened input slack $1.2-\max_{e\in\cS_t}|Ke|$. 
%
The joint conformal ellipsoid baseline is competitive in closed loop, but its ellipsoidal confidence region becomes a conservative box when projected coordinatewise.

\begin{figure}[t]
\centering
\includegraphics[width=\linewidth]{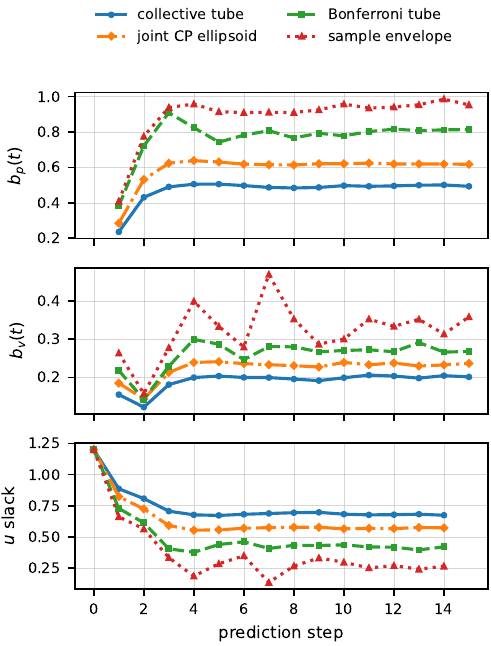}
\caption{Raw calibrated tube geometry for $\rho=0.85$, before applying the collective shift-compatible closure used by the MPC solver.  The collective raw score tube gives the smallest aggregate box tightening and the largest raw input slack after the ancillary feedback margin $|K|b_t$ is removed.}
\label{fig:tube_profiles}
\end{figure}

\subsection{Closed-loop MPC test and feasible-domain enlargement}

We run the receding-horizon controller from $x_0=(4.8,0.3)$ for $35$ sampling instants under \eqref{eq:numerical_disturbance} with $\rho=0.85$.  The collective, joint conformal ellipsoid, and Bonferroni controllers are tested on $40$ closed-loop rollouts each; the sample-envelope controller is tested on $20$ rollouts.  Table~\ref{tab:closed_loop_extended} reports zero recursive-feasibility failures and zero hard state/input violations.  With the admissible collective closure, the collective and joint conformal controllers have nearly identical costs, while the Bonferroni and envelope tubes remain more conservative.

\begin{table}[t]
\centering
\begingroup
\caption{Closed-loop MPC test at $\rho=0.85$ from $x_0=(4.8,0.3)$ over $35$ sampling instants. ``Fail'' is the observed recursive-feasibility failure rate. ``Viol.'' is the observed hard state/input violation rate.}
\label{tab:closed_loop_extended}
\begin{tabular}{lccccc}
\toprule
Tube & Runs & Cost & $\Delta$ cost & Fail & Viol.\\
\midrule
Collective & 40 & 56.89 & 0.00\% & 0 & 0\\
Joint CP & 40 & 56.83 & $-0.11\%$ & 0 & 0\\
Bonferroni & 40 & 57.29 & 0.70\% & 0 & 0\\
Envelope & 20 & 58.77 & 3.31\% & 0 & 0\\
\bottomrule
\end{tabular}
\endgroup
\end{table}

Figure~\ref{fig:feasible_region} evaluates the deterministic tightened MPC problem on a grid of $475$ initial states in the hard constraint box.  The collective controller, using the shift-compatible closure, is feasible at $407$ grid points, while the Bonferroni tube is feasible at $395$ grid points.  The $12$ additional grid points are near the constraint boundary.  

\begin{figure}[t]
\centering
\includegraphics[width=\linewidth]{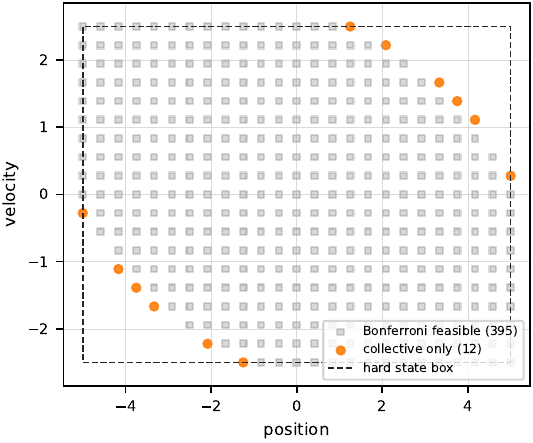}
\caption{Tightened MPC feasible region at $\rho=0.85$ using the admissible collective closure.  Gray squares are grid points feasible for the Bonferroni tube.  Orange circles are additional grid points feasible for the collective tube but infeasible under Bonferroni tightening.  The dashed rectangle is the original hard state box.}
\label{fig:feasible_region}
\end{figure}

\section{Discussion}\label{sec:discussion}
The proposed framework separates three concepts that are often conflated.  First, \emph{trajectory calibration} is an offline operation: exchangeable residual trajectories select a tube and certify the probability that a fresh trajectory leaves it.  No distributional model for disturbances is needed, but the exchangeability condition must be justified by the data-collection protocol.  Second, \emph{tube admissibility} is deterministic, since shift compatibility, Pontryagin tightening, and terminal invariance are set inclusions.  They can be checked or enforced exactly for polytopic and ellipsoidal tube families, and approximately for learned tube parameterizations.  Third, \emph{closed-loop repeated use} is a separate stochastic problem, for which  Theorems~\ref{thm:horizon_safety} and \ref{thm:rf} give a calibrated one-horizon and one-step recursive-feasibility certificate.  A finite deployment is handled by \eqref{eq:T_bound}; indefinite hard safety requires additional structure such as online calibration updates,   or a deterministic robust backup tube.

The baselines in the numerical tests Section~\ref{sec:numerics}, including the joint conformal ellipsoid baseline, are not meant to exhaust the (vast) stochastic-MPC literature. 
 Advanced randomized MPC methods can also obtain joint probabilistic statements without allocating risk to every time-facet separately.  Our distinction, however, is architectural, since  those methods typically certify a sampled optimization problem through scenario support/rank arguments, whereas CT-MPC certifies an offline tube object and then uses deterministic tube-MPC inclusions online.  The numerical comparison tests the specific question addressed here: whether calibrating the whole residual trajectory as the tube object gives less conservative box tightenings than marginal allocation or ellipsoid-to-box projection while preserving the standard recursive-feasibility proof.

The present theory is focused on linear systems with additive uncertainty and convex constraints because this setting cleanly separates the statistical certificate from the deterministic MPC proof.  Nonlinear, output-feedback, and adaptive variants are natural next steps, but they should be treated as integrated problems, since the calibration layer can certify a trajectory tube only after the chosen MPC layer supplies the required shifted candidate, robust tightenings, and terminal invariance conditions.

\section{Conclusion}\label{sec:conclusion}
This paper reformulated data-calibrated uncertainty quantification for MPC around the object that MPC actually uses, that is, a finite-horizon tube.  Rather than calibrating many marginal risks and combining them with a union bound, CT-MPC calibrates a single trajectory-level error tube, uses its projections for deterministic Pontryagin tightening, and certifies the probability that the complete prediction-error trajectory leaves the tube.  For linear additive systems with fixed ancillary feedback, the resulting controller preserves the core tube-MPC structure, in which tube containment implies joint state-input safety, recursive feasibility follows by a shifted candidate, and the value function satisfies a practical decrease inequality.  The finite-sample certificate is distribution-free and depends on collective support complexity, i.e., the number of residual trajectories defining the tube.  This provides a control-theoretic route to non-Bonferroni stochastic MPC with explicit dependence awareness.

For deployments that require indefinite hard safety, the natural implementation is a hybrid one in which the collective tube is used as the performance tube, while a robust invariant backup tube with a one-step guard provides a deterministic fallback whenever calibrated containment is lost.  This separates finite-sample distribution-free performance certification from infinite-horizon safety invariance.

\appendices

\section{Boundary Verification for Two Tube Families}\label{app:boundary}
For shared-trajectory order-statistic tubes, let $s_h$ be a score and let the threshold in score $h$ be determined by the $r_h+1$ largest calibration score values. In any augmented sample, the boundary for score $h$ is the set of those $r_h+1$ trajectories. The collective boundary is their union over $h$. If omitted trajectories are all accepted by the tube, then none of the full-sample boundary trajectories can have been omitted; otherwise the corresponding order statistic would be exceeded. Conversely, if the boundary union is retained, all omitted trajectories are below every retained order statistic and are accepted. Projectivity follows because retaining the boundary preserves the same order-statistic-defining trajectories.

For the Pareto lower-orthant tube, a trajectory is in the boundary if its score vector is Pareto maximal. If an omitted Pareto-maximal score vector were accepted by the envelope of retained scores, it would be dominated by a retained score vector, contradicting maximality. Conversely, if all Pareto maxima are retained, every nonmaximal omitted score vector is dominated along a finite chain by a retained Pareto maximum, and hence is accepted. Removing nonmaximal points cannot create new Pareto maxima when all full-sample maxima are retained, proving projectivity.

\section{Fallback Certificate for Irregular Tube Designers}\label{app:compression}
The sharp beta law requires boundary equivalence. Some useful tube designers, such as greedy discard rules or learned tube-shape selection, may not satisfy it. A conservative fallback can be obtained from stable sample compression. Suppose a tube designer returns $\hat\theta_N$, a compression set $C_N\subset[N]$, and a discarded set $D_N\subset[N]$. Let $b_N:=|C_N\cup D_N|$ and
\begin{equation}
    \widehat V_N:=\frac{1}{N}\sum_{i=1}^N \one\{\mathbf E^i\notin\cE_{\hat\theta_N}\}.
\end{equation}
Assume that $\hat\theta_N$ can be reconstructed from the samples indexed by $C_N\cup D_N$, that removing any noncompressed accepted sample does not change the reconstructed tube, and that all non-discarded calibration samples are accepted. Then the stable-compression bound of \cite{Hanneke2021} applied to the binary loss $\one\{\mathbf e\notin\cE_\theta\}$ yields, with probability at least $1-\bbeta$,
\begin{equation}
\begin{aligned}
V_N \le& \widehat V_N+
\sqrt{\frac{72\widehat V_N}{N}\left(2b_N+\log\frac{4e}{\bbeta}\right)}\\
&+\frac{32}{N}\left(2b_N+\log\frac{4e}{\bbeta}\right).
\end{aligned}
\label{eq:compression_bound}
\end{equation}
Since $\widehat V_N\le |D_N|/N$, \eqref{eq:compression_bound} gives a computable certificate for irregular tube designers. It is looser than Theorem~\ref{thm:beta}, but it preserves the separation between tube certification and MPC recursive-feasibility analysis.

\bibliographystyle{IEEEtran}
\bibliography{references}

\begin{IEEEbiography}[{\includegraphics[width=1in,height=1.25in,clip,keepaspectratio]{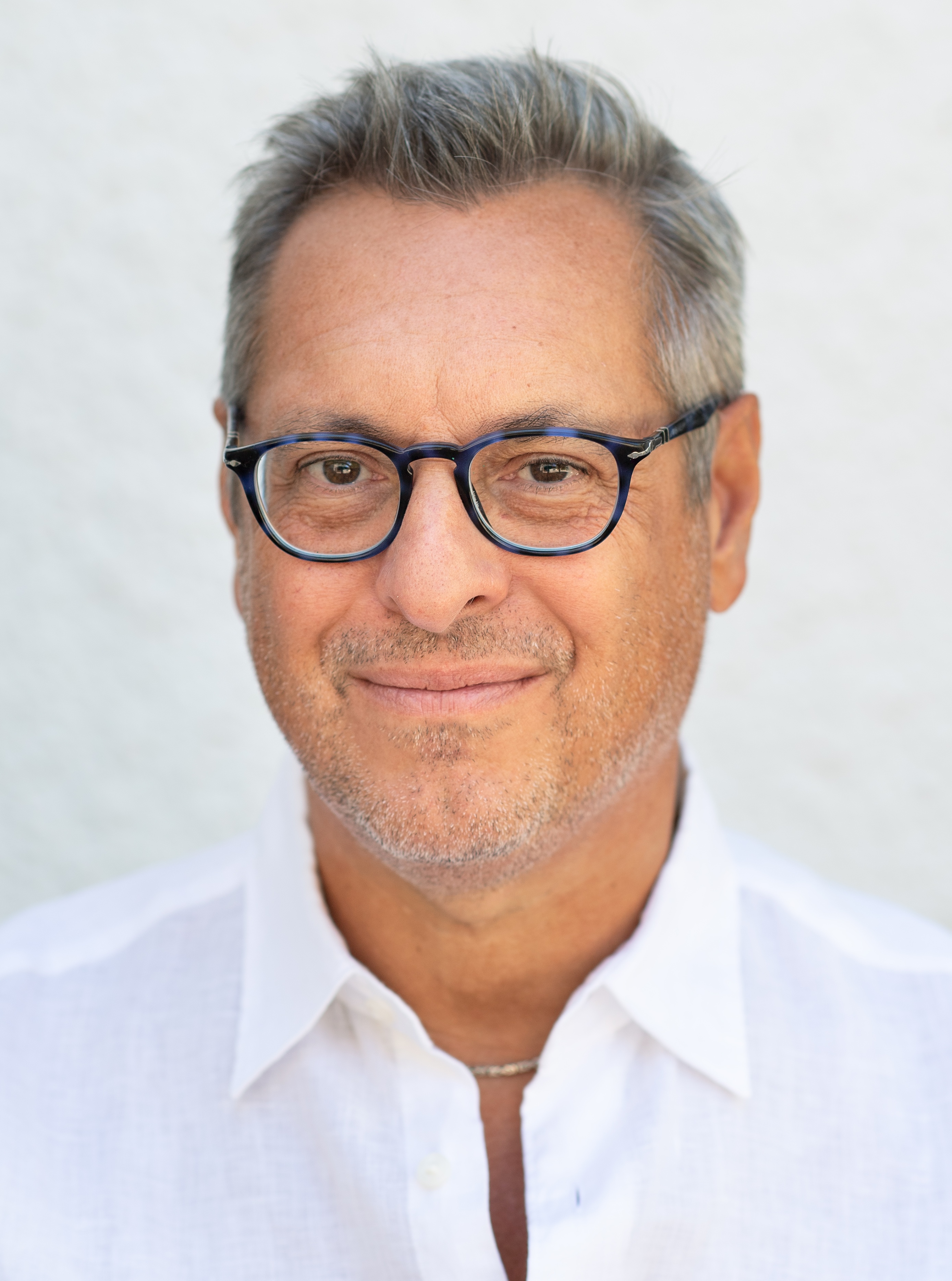}}]{Giuseppe C. Calafiore}
 is a full professor at DET, Politecnico di Torino, where he coordinates the Control Systems and Data Science group. He was an associate fellow of the IEIIT-CNR, Italy, and a Professor of Mechanical Engineering at CECS, VinUniversity, Vietnam, in 2023. Dr. Calafiore was a visiting scholar at the Information Systems Laboratory (ISL), Stanford University, California, in 1995; at the Ecole Nationale Sup\'erieure de Techniques Avanc\'ees (ENSTA), Paris, in 1998; and at the University of California at Berkeley, in 1999, 2003, 2007, 2017–19, and 2021 where he co-taught a Master course on Financial Data Science.  He was a Senior Fellow at the Institute of Pure and Applied Mathematics (IPAM), UCLA, in 2010. Dr. Calafiore is the author of about 220 journal and conference
proceedings papers, and of nine books.
He is a Fellow of the IEEE (2018), a Fellow of the Asia-Pacific Artificial Intelligence Association (AAIA), a Fellow
of the International Artificial Intelligence Industry Alliance (AIIA), and a Fellow Member of the European Academy of Sciences (EurASc). Dr. Calafiore received the IEEE Control System Society “George S. Axelby” Outstanding Paper Award in 2008. His research interests are in the fields of convex optimization, data science, machine learning, and identification and control of uncertain systems, with applications ranging from computational finance to networked systems, power management and smart mobility.\end{IEEEbiography}

\end{document}